\newtheorem{theorem}{Theorem}[section]
\newtheorem{corollary}[theorem]{Corollary}
\newtheorem{proposition}[theorem]{Proposition}
\newtheorem{lemma}[theorem]{Lemma}
\theoremstyle{definition}
\newtheorem{definition}[theorem]{Definition}
\newtheorem{remark}[theorem]{Remark}
\newtheorem{notation}[theorem]{Notation}
\newtheorem{example}[theorem]{Example}
\newcommand{\A}{\mathcal{A}}
\newcommand{\D}{\mathcal{D}}
\newcommand{\K}{\mathcal{K}}
\newcommand{\Pc}{\mathcal{P}}
\newcommand{\T}{\mathcal{T}}
\newcommand{\V}{\mathcal{V}}
\newcommand{\PG}{\mathrm{PG}}
\newcommand{\F}{\mathbb{F}}
\newcommand{\U}{\mathscr{U}}
\newcommand{\G}{\widehat{\mathbb{G}}}
\begin{document}

\title{Upper bounds on the length function for covering codes with covering radius $R$ and codimension $tR+1$
\date{}
}
\maketitle
\begin{center}
{\sc Alexander A. Davydov
\footnote{A.A. Davydov ORCID \url{https://orcid.org/0000-0002-5827-4560}}
}\\
{\sc\small Institute for Information Transmission Problems (Kharkevich institute)}\\
 {\sc\small Russian Academy of Sciences}\\
 {\sc\small Moscow, 127051, Russian Federation}\\\emph {E-mail address:} adav@iitp.ru\medskip\\
 {\sc Stefano Marcugini
 \footnote{S. Marcugini ORCID \url{https://orcid.org/0000-0002-7961-0260}},
 Fernanda Pambianco
 \footnote{F. Pambianco ORCID \url{https://orcid.org/0000-0001-5476-5365}}
 }\\
 {\sc\small Department of  Mathematics  and Computer Science,  Perugia University,}\\
 {\sc\small Perugia, 06123, Italy}\\
 \emph{E-mail address:} \{stefano.marcugini, fernanda.pambianco\}@unipg.it
\end{center}

\textbf{Abstract.}
The length function $\ell_q(r,R)$ is the smallest
length of a $ q $-ary linear code with codimension (redundancy) $r$ and covering radius $R$. In this work, new upper bounds on $\ell_q(tR+1,R)$ are obtained in the following forms:
\begin{equation*}
\begin{split}
    &(a)~\ell_q(r,R)\le cq^{(r-R)/R}\cdot\sqrt[R]{\ln q},~  R\ge3,~r=tR+1,~t\ge1,\\
 &\phantom{(a)~} q\text{ is an arbitrary prime power},~c\text{ is independent of }q.
  \end{split}
\end{equation*}
\begin{equation*}
\begin{split}
    &(b)~\ell_q(r,R)< 3.43Rq^{(r-R)/R}\cdot\sqrt[R]{\ln q},~  R\ge3,~r=tR+1,~t\ge1,\\
 &\phantom{(b)~} q\text{ is an arbitrary prime power},~q\text{ is large enough}.
  \end{split}
\end{equation*}
In the literature, for $q=(q')^R$ with $q'$ a prime power, smaller upper bounds are known; however, when $q$ is an arbitrary prime power, the  bounds of this paper are better than the known ones.

For $t=1$, we use a one-to-one correspondence between $[n,n-(R+1)]_qR$ codes and $(R-1)$-saturating $n$-sets in the projective space $\mathrm{PG}(R,q)$. A new construction of such saturating sets providing sets of small size is proposed. Then the $[n,n-(R+1)]_qR$  codes, obtained by geometrical methods, are taken as the starting ones
in the lift-constructions (so-called ``$q^m$-concatenating constructions'') for covering codes to obtain infinite families of codes with growing codimension $r=tR+1$, $t\ge1$.

\textbf{Keywords:} Covering codes, The length function, Saturating sets in projective spaces

\textbf{Mathematics Subject Classification (2010).} 94B05, 51E21, 51E22

\section{Introduction}

Let $\F_{q}$ be the Galois field with $q$ elements. Let $\F_{q}^{\,n}$ be the $n$-dimensional vector space over~$\F_{q}.$ Denote by $[n,n-r,d]_q$ a $q$-ary linear code of length $n$, codimension (redundancy)~$r$, and minimum distance $d$. Usually, $d$ is omitted when not relevant. For an introduction to coding theory, see \cite{libroBierbrauer,HufPless,MWS,Roth}.

The sphere of radius $R$ with center $c$ in $F_{q}^{\,n}$
is the set $\{v:v\in F_{q}^{\,n},$ $d(v,c)\leq R\}$ where $d(v,c)$ is the Hamming distance between the vectors $v$ and $c$.

\begin{definition} \label{Def1_CoverRad}
 A linear $[n,n-r,d]_{q}$ code has \emph{covering radius} $R$ and is denoted as an $[n,n-r,d]_{q}R$ code if any of the following equivalent properties holds:

  \textbf{(i)} The value $R$  is the smallest integer such that the space $\F_{q}^{\,n}$ is covered by
the spheres of radius $R$ centered at the codewords.

\textbf{(ii)}
Every vector in $\F_{q}^{\,r}$ is equal to a linear combination of at most $R$ columns of a parity check matrix of the code, and $R$ is the smallest value with
this property.
\end{definition}

The covering density $\mu$  of an $[n, n-r,d]_qR$-code is defined as the ratio of the total volume of all spheres of radius $R$ centered at the codewords to the volume of the space~$\F_{q}^{\,n}$. By Definition \ref{Def1_CoverRad}(i), we have $\mu\ge1$.
The covering quality of a code is better if its covering density is smaller.
For the fixed parameters $q$, $r$, $R$, the covering density $\mu$ of an $[n, n-r]_{q}R$ code decreases with decreasing~$n$.

The covering problem for codes is that of finding codes with small covering radius
with respect to their lengths and dimensions. Codes investigated from the point
of view of the covering problem are usually called \emph{covering codes} (in contrast to
error-correcting codes). If covering radius and
codimension are fixed then the covering problem for codes is that of finding codes with small length and/or obtaining good upper bounds for the length.

\begin{definition}
The \emph{length function} $\ell_q(r,R)$ is the smallest
length of a $ q $-ary linear code with codimension (redundancy) $r$ and covering radius $R$.
\end{definition}

For an introduction to coverings of vector Hamming spaces over finite fields and covering codes, see
\cite{GrahSlo,Handbook-coverings,CHLS-bookCovCod,DGMP-AMC}, the references therein, and the online bibliography~\cite{LobstBibl}.

Studying covering codes is a classical combinatorial problem.
Covering codes are connected  with many areas of theory and practice, for example, with decoding errors
and erasures, data compression, write-once memories, football pools, Caley graphs,
and Berlekamp-Gale games, see \cite[Section
1.2]{CHLS-bookCovCod}. Codes of covering radius~2 and codimension
3 are relevant for the degree/diameter problem in graph theory
\cite {Cayley,KKKPS} and defining sets of block designs
\cite{Bo-Sz-Ti}. Covering
codes can also be used  in steganography \cite[Chapter
14]{libroBierbrauer}, \cite {BierbStegan,GalKaba,GalKabaPIT}, in databases
\cite{PartSumQuer}, in constructions of identifying codes
\cite{Identif,identif2}, for solving the so-called learning parity with noise (LPN) \cite{LPN_CovCod}, in an analysis of blocking switches
\cite{swithc}, in reduced representations of logic functions \cite{Bulev}, in the list decoding of error correcting codes \cite{ChenListDec}, in cryptography \cite{MenNewCrypCovCod}. There are connections between covering
codes and a popular game puzzle, called ``Hats-on-a-line'' \cite{Hats-on-line,LenSerHat2002}.

Let $\PG(N,q)$ be the $N$-dimensional projective space over the Galois field~$\mathbb{F}_q$.
 We will say ``$N$-space'' (or ``$M$-subspace'') when the value of $q$ is clear by the context.

We say that $M$ points of $\PG(N,q)$ are \emph{in general position} if they are not
contained in an $(M-2)$-subspace. In particular, $N+1$ points of $\PG(N,q)$ are in general positions if and only if they do not belong to the same hyperplane. A point of $\PG(N,q)$ in homogeneous coordinates can be considered as a vector of $\F_q^{N+1}$. In this case, points in general position correspond to linear independent vectors.

Effective methods to obtain upper bounds on the length function $\ell_q(r,R)$ are connected with \emph{saturating sets in $\PG(N,q)$}.

\begin{definition}\label{def1_usual satur}
 A point set $S\subseteq\PG(N,q)$ is
$\rho$-\emph{saturating} if any of the following equivalent properties holds:

\textbf{(i)}  For any point $A\in\PG(N,q)$ there exists a value $\overline{\rho}\le\rho$ such that in $S$ there are
$\overline{\rho}+1$ points in general position generating a $\overline{\rho}$-subspace of $\PG(N,q)$ in which $A$ lies, and $\rho$ is the smallest value with this property.

\textbf{(ii)} Every
point $A\in\PG(N,q)$ can be written as a linear combination of at most $\rho+1$ points of $S$, and $\rho$ is the smallest value with this property (cf. Definition~\ref{Def1_CoverRad}(ii)).
\end{definition}

 In the literature, saturating sets are also called ``saturated
sets'', ``spanning sets'', and ``dense sets''.

Let $s_q(N,\rho)$ be the smallest size of a $\rho$-saturating set in $\mathrm{PG}(N,q)$.

If $q$-ary positions of a column of an $r\times n$ parity check matrix of an $[n,n-r]_qR$ code are treated as homogeneous coordinates of a point in $\mathrm{PG}(r-1,q)$ then this parity check matrix defines an $(R-1)$-saturating set of size $n$ in $\mathrm{PG}(r-1,q)$, and vice versa. So, there is a \emph{one-to-one correspondence} between $[n,n-r]_qR$ codes and $(R-1)$-saturating $n$-sets in $\PG(r-1,q)$. Therefore,
\begin{equation}\label{eq1:one-to-one}
    \ell_q(r,R)=s_q(r-1,R-1).
\end{equation}

For an introduction to the projective geometry over finite fields and its connection with coding theory, see \cite{EtzStorm2016,Giul2013Survey,Hirs,Hirs_PG3q,HirsStor-2001,HirsThas-2015,Klein-Stor,LandSt,DGMP-AMC} and the references therein. Note also that in the papers \cite{BDGMP-R2Redun2016,BDGMP-R2Castle,BDGMP-R2R3CC_2019,BDMP-arXivR3_2018,Dav90PIT,Dav95,DGMP-Bulg2008,DGMP-AMC,DMP-R=tR2019,DMP-R=3Redun2019,DMP-ICCSA2020,%
DavOst-EJC2000,DavOst-DESI2010,DavOst-IEEE2001,denaux,denaux2,Giul2013Survey,GrahSlo,HegNagy,Nagy,Struik}, distinct aspects of covering codes and saturating sets, including their joint investigations,  are considered.

Throughout the paper, $c$ and $c_i$ are constants independent of $q$ but it is possible that $c$ and $c_i$ are dependent on $r$ and $R$.

In \cite{BDGMP-R2R3CC_2019,DMP-R=3Redun2019}, \cite[Proposition 4.2.1]{denaux}, see also the references therein, the following lower bound is considered:
\begin{equation}\label{eq1:lowbound}
\ell_q(r,R)=s_q(r-1,R-1)\ge cq^{(r-R)/R},~  R\text{ and }r\text{ fixed}.
\end{equation}
In \cite{DGMP-Bulg2008,DGMP-AMC}, see also the references therein, the bound \eqref{eq1:lowbound} is given in another \mbox{(asymptotic)} form.


Let $t,s$ be integer. Let $q'$ be a prime power. In the literature, it is proved that in the following cases, the bound \eqref{eq1:lowbound} is achieved:
 \begin{equation}\label{eq1:cases_lowbnd}
 \begin{split}
 &r\ne tR,~  q=(q')^R,~ \text{\cite{DGMP-Bulg2008,DGMP-AMC,denaux,denaux2,HegNagy,KKKPS}}; \\
 &R=sR',~ r=tR+s, ~q=(q')^{R'},~\text{\cite{DGMP-Bulg2008,DGMP-AMC}};\\
 &r=tR,~q\text{ \emph{is an arbitrary prime power}},~ \text{\cite{Dav95,DavOst-EJC2000,DavOst-IEEE2001,DGMP-Bulg2008,DGMP-AMC,DMP-R=tR2019}}.
 \end{split}
 \end{equation}

In the general case, for arbitrary $r,R, q$, in particular when $r\ne tR$ and $q$ is an arbitrary prime power, the problem
of achieving the bound \eqref{eq1:lowbound} is open.

In the literature \cite{BDGMP-R2Redun2016,BDGMP-R2Castle,BDGMP-R2R3CC_2019,BDMP-arXivR3_2018,DMP-R=3Redun2019,DMP-ICCSA2020,Bo-Sz-Ti,Kovacs,Nagy}, for $R=2$ with any $q$ and $R=3$ with $q$ upper bounded, upper bounds of the following form are obtained:
\begin{equation}\label{eq1:upbound}
\begin{split}
    &\ell_q(r,R)=s_q(r-1,R-1)\le cq^{(r-R)/R}\cdot\sqrt[R]{\ln q},\\
 &r\ne tR,~ q\text{ is an arbitrary prime power},
  \end{split}
\end{equation}
see Section \ref{subsec:knownR2R3} for details.
\begin{remark}\label{rem1:price}
In the bounds of the form \eqref{eq1:upbound}, the ``price'' of the non-restrict structure of $q$ is the factor $\sqrt[R]{\ln q}$.
\end{remark}

For $R\ge3$, $r\ne tR$, when $q$ is an arbitrary prime power, the standard known way to obtain upper bounds on the length function is the so-called
direct sum construction \cite{CHLS-bookCovCod,Handbook-coverings}. This construction gives, see Section~\ref{subsec:dirSum},
\begin{equation}\label{eq1:DS}
\begin{split}
    &\ell_q(tR+1,R)\le cq^{(r-R)/R+(R-2)/2R}\sqrt{\ln q},~  R\ge3,~r=tR+1,~t\ge1,\\
 & q\text{ is an arbitrary prime power},
  \end{split}
\end{equation}
that is worse than the bound \eqref{eq1:upbound}.

In this paper, for $q$  an arbitrary prime power, we prove the upper bound of the form \eqref{eq1:upbound} on the length function $\ell_q(Rt+1,R)$, $t\ge1$, $R\ge3$, see Section \ref{sec:main_res}. In general, we obtain the bounds \eqref{eq1:new} and \eqref{eq1:newinf}.
\begin{equation}\label{eq1:new}
\begin{split}
    &\ell_q(tR+1,R)\le cq^{(r-R)/R}\cdot\sqrt[R]{\ln q},~  R\ge3,~r=tR+1,~t\ge1,\\
 & q\text{ is an arbitrary prime power}.
  \end{split}
\end{equation}
\begin{equation}\label{eq1:newinf}
\begin{split}
    &\ell_q(r,R)< 3.43Rq^{(r-R)/R}\cdot\sqrt[R]{\ln q},~  R\ge3,~r=tR+1,~t\ge1,\\
 & q\text{ is an arbitrary prime power},~q\text{ is large enough}.
  \end{split}
\end{equation}
The bounds obtained are new and essentially better than the known ones of the form~\eqref{eq1:DS}.

The main contribution in our approach is obtaining new upper bounds on\linebreak
 $\ell_q(R+1,R)$ by a geometric way. Then we use the lift-constructions for covering codes to obtain the bounds on $\ell_q(tR+1,R)$.

In the beginning we consider the case $t=1$ in projective geometry language. We prove the upper bound on the smallest size $s_q(R,R-1)$ of $(R-1)$-saturating sets in $\PG(R,q)$. For it we propose Construction A, that obtains a saturating set by a step-by-step algorithm.
Then we estimate the size of the obtained $n$-set that corresponds to an $[n,n-(R+1)]_qR$ code. This gives the bounds on $\ell_q(R+1,R)$.

For $t\ge2$, we use the lift-constructions for covering codes.  These constructions are variants of the so-called ``$q^m$-concatena\-ting constructions'' proposed in \cite{Dav90PIT} and developed in \cite{Dav95,DGMP-Bulg2008,DGMP-AMC,DavOst-IEEE2001,DavOst-DESI2010}, see also the references therein and \cite[Section~5.4]{CHLS-bookCovCod}. The $q^m$-concatenating constructions obtain infinite families of covering codes with growing codimension using a starting code with a small one. The covering density of the codes from the infinite familes is approximately the same as for the starting code.

We take the $[n,n-(R+1)]_qR$ codes corresponding to the $(R-1)$-saturating sets in $\PG(R,q)$ as the starting ones for the $q^m$-concatena\-ting constructions and obtain infinite families of covering codes with growing codimension $r=tR+1$, $t\ge1$. These families provide the bounds on $\ell_q(tR+1,R)$.

 The paper is organized as follows. In Section \ref{sec_known}, the known upper bounds for $r\ne tR$ and arbitrary prime power $q$ are given. In Section \ref{sec:main_res}, the main new results are written. Section \ref{sec:constrA} describes Construction A that obtains $(R-1)$-saturating $n$-sets in $\PG(R,q)$ corresponding to $[n,n-(R+1)]_qR$ codes. Estimates of sizes of saturating sets obtained by  Construction~A and the corresponding upper bounds are given in Sections \ref{sec:est_size} and \ref{sec:boundsR_R-1}. In Section~\ref{sec:R=3}, for illustration, bounds on the length function $\ell_q(4,3)$ and 2-saturating sets in $\PG(3,q)$ are considered.  In Section \ref{sec:qmconc}, upper bounds on the length function $\ell_q(tR+1,R)$ are obtained for growing $t\ge1$. These bounds are provided by infinite families of covering codes with growing codimension $r=tR+1$, $t\ge1$, created by the $q^m$-concatenating constructions.

 \section{The known upper bounds for $r\ne tR$ and arbitrary prime power $q$}\label{sec_known}

\subsection{Bounds for $R=2$ with any $q$ and $R=3$ with $q$ upper bounded}\label{subsec:knownR2R3}

Let $\delta_{i,j}$ be the Kronecker delta. For $R=2,3$, $r\ne tR$, when $q$ is an arbitrary prime power, as far as it is known to the authors, the best upper bounds in the literature  are as follows.
\begin{equation}
\begin{split}
&\ell_q(r,2)=s_q(r-1,1)\le\Phi(q)\cdot q^{(r-2)/2}\cdot\sqrt{\ln q}+2\lfloor q^{(r-5)/2}\rfloor,~\text{\cite{BDGMP-R2Castle,BDGMP-R2R3CC_2019,Nagy}}\label{eq2:r_odd_R=2}\\
&r=2t+1\geq3,~r\neq9,13,~t\ge1,~t\ne4,6;\\
&\Phi(q)=\left\{\begin{array}{@{}ll@{}}
    0.998\sqrt{3}<1.729&\text{if }q\le160001\smallskip\\
    1.05\sqrt{3}<1.819&\text{if }160001<q\le321007\smallskip\\
    \sqrt{3+\frac{\ln\ln q}{\ln q}}+\sqrt{\frac{1}{3\ln^2 q}}+\frac{3}{\sqrt{q\ln q}}<1.836&\text{if }q>321007
    \end{array}
    \right.;\\
    &\lim_{q\rightarrow\infty}\Phi(q)=\sqrt{3}.
\end{split}
    \end{equation}
    \begin{equation}\label{eq2:R=3_r=4}
    \begin{split}
  &\ell_q(r,3)=s_q(r-1,2)<c_4\cdot q^{(r-3)/3}\cdot\sqrt[3]{\ln q}+3\lfloor q^{(r-7)/3}\rfloor~\text{\cite{BDGMP-R2R3CC_2019,DMP-R=3Redun2019,DMP-ICCSA2020}}\\
  &+2\left\lfloor q^{(r-10)/3}\right\rfloor+\delta_{r,13},~r=3t+1,~t\ge1;~c_4< \left\{\begin{array}{@{}l@{}}
  2.61 \text{ if }  13\le q\le4373 \smallskip\\
  2.65\text{ if } 4373< q\le7057
\end{array}
  \right..
\end{split}
\end{equation}
    \begin{equation}\label{eq2:R=3_r=5}
    \begin{split}
  &\ell_q(r,3)=s_q(r-1,2)<c_5\cdot q^{(r-3)/3}\cdot\sqrt[3]{\ln q}+3\lfloor q^{(r-8)/3}\rfloor~~\text{\cite{BDGMP-R2R3CC_2019,DMP-R=3Redun2019}}\\
  &+2\left\lfloor q^{(r-11)/3}\right\rfloor+\delta_{r,14},~
r=3t+2,~ t\ge1;~c_5< \left\{
\begin{array}{@{}l@{}}
  2.785 \text{ if }  11\le q\le401 \smallskip\\
  2.884\text{ if } 401< q\le839
\end{array}
  \right..
    \end{split}
    \end{equation}

In \eqref{eq2:r_odd_R=2}, the results  for $r=3$ are obtained by computer search, if $q\le 321007$, and in a theoretical way, if $q> 321007$.
In \eqref{eq2:R=3_r=4}, \eqref{eq2:R=3_r=5}, the results  for $r=4,5$,  are obtained by computer search.
The rest of the results in \eqref{eq2:r_odd_R=2}--\eqref{eq2:R=3_r=5} are obtained by applying the lift-construct\-ions ($q^m$-concatenating constructions) for covering codes \cite[Section~5.4]{CHLS-bookCovCod}, \cite{Dav90PIT,Dav95,DGMP-Bulg2008,DGMP-AMC,DavOst-IEEE2001,DavOst-DESI2010}.

\subsection{Direct sum construction}\label{subsec:dirSum}

The direct sum construction \cite{CHLS-bookCovCod,Handbook-coverings,DGMP-AMC} forms an $[n_1+n_2,n_1+n_2-(r_1+r_2)]_qR$ code $V$ with $R = R_1+R_2$ from
two codes: an $[n_1,n_1-r_1]_qR_1$ code $V_1$ and an $[n_2,n_2-r_2]_qR_2$ code $V_2$.

For example, for the code $V$, let $R=3$, $r=3t+1$. Choose $r_1=2t+1$, $R_1=2$, $n_1\thickapprox c_1q^{(r_1-2)/2}\sqrt{\ln q}=c_1q^{t-1}\sqrt{q\ln q}$, see \eqref{eq1:upbound}, \eqref{eq2:r_odd_R=2}; $r_2=t$, $R_2=1$, $n_2=(q^t-1)/(q-1)\thickapprox c_2q^{t-1}$, i.e. $V_2$ is the $[\frac{q^t-1}{q-1},\frac{q^t-1}{q-1}-t]_q1$ Hamming code. The length $n$ of the resulting code $V$ is $n\thickapprox cq^{t-1}\sqrt{q\ln q}=cq^{(r-R)/R+(R-2)/2R}\sqrt{\ln q}$.

Similarly, one can show that \eqref{eq1:DS} holds. For the code $V$,  let $r=tR+1$. Choose $r_1=2t+1$, $R_1=2$, $n_1\thickapprox c_1q^{t-1}\sqrt{q\ln q}$; $r_2=(R-2)t$, $R_2=R-2$, $n_2=(R-2)(q^t-1)/(q-1)\thickapprox c_2q^{t-1}$, i.e. $V_2$ is the sequential direct sum of $R-2$ Hamming codes. Again, the length $n$ of $V$ is $n\thickapprox cq^{t-1}\sqrt{q\ln q}=cq^{(r-R)/R+(R-2)/2R}\sqrt{\ln q}$.

\section{The main new results}\label{sec:main_res}

\begin{notation}\label{not1}
Throughout the paper, fixed $R$, we denote the following:
\begin{itemize}
  \item $\theta_{R,q}=(q^{R+1}-1)/(q-1)$ is the number of points in the projective space $\PG(R,q)$.

  \item $\triangleq$ is the sign ``equality by definition''.

  \item $\lambda>0$ is a positive constant independent of $q$ and $R$, its value can be assigned arbitrarily.

  \item $D^{\min}_R$ is a constant independent of $q$ and $\lambda$ and dependent on $R$.

  \item $Q_{\lambda,R}$, $C_{\lambda,R}$, and $D_{\lambda,R}$,  are constants independent of $q$ and dependent on $\lambda$ and~$R$.

  \item  $\beta_{\lambda,R}(q)$, $\Upsilon_{\lambda,R}(q)$, and $\Omega_{\lambda,R}(q)$ are functions of $q$, parameters of which are dependent on $\lambda$ and~$R$.

  \item
  \begin{equation}\label{eq3:constantDmin}
D^{\min}_R\triangleq\frac{R}{R-1}\sqrt[R]{R(R-1)\cdot R!}.
\end{equation}

  \item
  \begin{equation}\label{eq3:constantDlamb}
D_{\lambda,R}\triangleq \lambda+\frac{R\cdot R!}{\lambda^{R-1}}.
\end{equation}

  \item
\begin{equation}\label{eq3:beta0_def}
\beta_{\lambda,R}(q)\triangleq\lambda-\frac{R-1}{\sqrt[R]{q\ln q}}.
\end{equation}

  \item
\begin{equation}\label{eq3:Upsilon_def}
\Upsilon_{\lambda,R}(q)\triangleq\frac{\lambda^{R-1}}{(R-1)!}\sqrt[R]{\frac{\ln^{R-1} q}{q}}.
\end{equation}

  \item
  \begin{equation}\label{eq3:Omega_def}
\Omega_{\lambda,R}(q)\triangleq\lambda+\frac{R\cdot R!}{\beta_{\lambda,R}^{R-1}(q)}\cdot\frac{2}{2-\frac{1}{q}-\Upsilon_{\lambda,R}(q)}.
\end{equation}

  \item
\begin{equation}\label{eq3:Qlx}
  Q_{\lambda,R}\triangleq\lceil x\rceil,~x\triangleq\left\{\begin{array}{ccc}
                         e^{R-1} & \text{ if } & \Upsilon_{\lambda,R}(e^{R-1}) \le 1\\
                         y  & \text{ if } &  \Upsilon_{\lambda,R}(e^{R-1}) > 1
                       \end{array}
  \right.,
\end{equation}
where $y$ is a solution of the equation
\begin{equation}\label{eq3:Ups=1}
\Upsilon_{\lambda,R}(y)=1\text{ under the condition }y>e^{R-1}.
\end{equation}

  \item
\begin{equation}\label{eq3:constantclamb}
C_{\lambda,R}\triangleq\lambda+\frac{R\cdot R!}{\beta_{\lambda,R}^{R-1}(Q_{\lambda,R})}\cdot\frac{2Q_{\lambda,R}}{Q_{\lambda,R}-1}.
\end{equation}
\end{itemize}
\end{notation}

Lemma \ref{lem3:Dmin} follows from Lemma
\ref{lem7:Dmin}. Theorem \ref{th3:infin_fam} summarizes the results of Sections \ref{sec:constrA}--\ref{sec:qmconc}.

\begin{lemma}\label{lem3:Dmin}
 We have
\begin{equation}\label{eq3:Dmin(N)<}
D^{\min}_R=\min_\lambda D_{\lambda,R}<\left\{\begin{array}{ccl}
                    1.651R & \text{ if } & R\ge3 \\
                    0.961R & \text{ if } & R\ge7 \\
                    0.498R & \text{ if }  & R\ge36\\
                    0.4R & \text{ if }  & R\ge178
                  \end{array}
\right..
\end{equation}
\end{lemma}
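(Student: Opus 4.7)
The plan is to first compute $\min_\lambda D_{\lambda,R}$ analytically (recovering the closed form in \eqref{eq3:constantDmin}), and then to derive the four numerical bounds by combining an explicit Stirling estimate with a monotonicity argument in $R$.

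For the first step, note that $D_{\lambda,R} = \lambda + R\cdot R!/\lambda^{R-1}$ is strictly convex on $(0,\infty)$, as the second summand has positive second derivative. Setting $\partial_\lambda D_{\lambda,R} = 0$ yields the unique critical point $\lambda_* = \sqrt[R]{R(R-1)\,R!}$. A short algebraic manipulation then gives
$$\min_\lambda D_{\lambda,R} \;=\; \lambda_* + \frac{\lambda_*}{R-1} \;=\; \frac{R}{R-1}\sqrt[R]{R(R-1)\,R!} \;=\; D^{\min}_R,$$
which establishes the defining identity.

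For the numerical bounds, set $g(R) \triangleq D^{\min}_R/R$ and apply Stirling's inequality $R! < \sqrt{2\pi R}\,(R/e)^R\,e^{1/(12R)}$ to obtain
$$g(R) \;<\; \frac{R}{e(R-1)}\bigl(R(R-1)\bigr)^{1/R}(2\pi R)^{1/(2R)}\,e^{1/(12R^2)},$$
whose right-hand side tends to $1/e \approx 0.3679$ as $R \to \infty$. Directly evaluating $D^{\min}_R$ at $R\in\{3,7,36,178\}$ shows that each of the four threshold inequalities holds at its left endpoint; for instance at $R=3$ one has $D^{\min}_3=(3/2)\sqrt[3]{36}\approx 4.953<1.651\cdot 3$, with essentially matching constant. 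Propagating to all larger $R$ reduces to showing that $g(R)$ is nonincreasing, which can be done by comparing the ratio $g(R+1)/g(R)$ via the Stirling-based two-sided bounds on $R!$; after taking logarithms this becomes a one-variable inequality in $R$ that is elementary to verify.

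The main obstacle is this monotonicity step. Because $g(R)$ involves the $R$-th root of a factorial, neither a direct differentiation (treating $R$ as continuous) nor a naive discrete ratio comparison is immediate; both require carefully controlling the Stirling error terms so that they remain small enough to preserve strict monotonicity down to the smallest threshold $R=3$. The cleanest workaround is to replace $g(R)$ once and for all by the Stirling-based upper bound displayed above: that bound is itself manifestly decreasing in $R$ for $R\ge 3$, and a direct numerical check of it at each of $R\in\{3,7,36,178\}$ then delivers all four inequalities simultaneously, avoiding any separate monotonicity argument for $g$ itself.
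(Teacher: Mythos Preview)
Your optimization step (finding $\lambda_*$ and recovering the closed form for $D^{\min}_R$) matches the paper's exactly. The problem is your ``cleanest workaround'' for the monotonicity. The constants $1.651$, $0.961$, $0.498$, $0.4$ in \eqref{eq3:Dmin(N)<} are not chosen with slack: Table~\ref{tab1} records $D^{\min}_3=1.651\cdot 3$ and $D^{\min}_7=0.961\cdot 7$, so $g(R_0)=D^{\min}_{R_0}/R_0$ already agrees with the stated constant to three decimals at each threshold. Any strict upper bound $h(R)>g(R)$ will therefore overshoot. Concretely, your Stirling bound at $R=3$ gives
\[
h(3)=\frac{3}{2e}\cdot 6^{1/3}\cdot(6\pi)^{1/6}\cdot e^{1/108}\approx 1.6517>1.651,
\]
so ``a direct numerical check of it at each of $R\in\{3,7,36,178\}$'' already fails at the first threshold (and the same tightness occurs at $R=7$). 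Your earlier option---controlling $g(R{+}1)/g(R)$ via two-sided Stirling---might be salvageable, but you explicitly set it aside as delicate and do not carry it out.

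The paper avoids Stirling altogether. It rewrites
\[
g(R)=\frac{D^{\min}_R}{R}=\sqrt[R]{\,R^2\cdot\frac{(R-1)!}{(R-1)^{R-1}}\,},
\]
observes that $R^{2/R}$ is decreasing for $R\ge3$ via the derivative $(R^{2/R})'=2R^{2/R-2}(1-\ln R)<0$, and that $(R-1)!/(R-1)^{R-1}$ is decreasing for $R\ge2$. With $g$ decreasing, a \emph{direct} evaluation of $D^{\min}_R$ at $R\in\{3,7,36,178\}$ delivers \eqref{eq3:Dmin(N)<}. Because no approximation slack is introduced, the constants can be set as sharply as they are---which is precisely what breaks your Stirling-based substitute.
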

\begin{theorem}\label{th3:infin_fam}
 Let $R\ge3$ be fixed. Let the values used here correspond to Notation~\ref{not1}. For the length function $\ell_q(tR+1,R)$ and the smallest size $s_q(tR,R-1)$ of an $(R-1)$-saturating set in the
projective space $\PG(tR,q)$ we have the following upper bounds provided by infinite families of covering codes with growing codimension $r=tR+1$, $t\ge1$:
\begin{description}
  \item[(i)] \emph{(Upper bounds by a decreasing function)}

  If $q> Q_{\lambda,R}$, then $\Omega_{\lambda,R}(q)$ is a decreasing function of $q$ and $\Omega_{\lambda,R}(q)<C_{\lambda,R}$. Moreover,
\begin{equation}\label{eq3:boundDecrFun}
\begin{split}
& \ell_q(r,R)=s_q(r-1,R-1)<\Omega_{\lambda,R}(q)\cdot q^{(r-R)/R}\cdot\sqrt[R]{\ln q}+2Rq^{t-1}+R\theta_{t-1,q}\\
&<\left(\Omega_{\lambda,R}(q)+R\frac{2+q/(q-1)}{\sqrt[R]{q\ln q}}\right)q^{(r-R)/R}\cdot\sqrt[R]{\ln q},~r=tR+1,~t\ge1,~q> Q_{\lambda,R},
\end{split}
\end{equation}
where for $t\ge2$ the bound holds if $C_{\lambda,R}\sqrt[R]{q\ln q}+2R\le q+1$.

  \item[(ii)]  \emph{(Upper bounds by constants)}

Let $Q_0> Q_{\lambda,R}$ be a constant independent of $q$. Then $\Omega_{\lambda,R}(Q_0)$ is also a constant independent of $q$ such that $C_{\lambda,R}>\Omega_{\lambda,R}(Q_0)>D_{\lambda,R}$.
We have
\begin{equation}\label{eq3:boundConst}
\begin{split}
& \ell_q(r,R)=s_q(r-1,R-1)<cq^{(r-R)/R}\cdot\sqrt[R]{\ln q}+2Rq^{t-1}+R\theta_{t-1,q}\\
&<\left(c+R\frac{2+q_0/(q_0-1)}{\sqrt[R]{q_0\ln q_0}}\right)q^{(r-R)/R}\cdot\sqrt[R]{\ln q},~r=tR+1,~t\ge1,\\
&c\in\{C_{\lambda,R},\,\Omega_{\lambda,R}(Q_0)\},~q_0=\left\{\begin{array}{lcl}
                                                               Q_{\lambda,R} & \text{if} & c=C_{\lambda,R} \\
                                                               Q_0 & \text{if} & c=\Omega_{\lambda,R}(Q_0)
                                                             \end{array}
\right.,~q> q_0,
\end{split}
\end{equation}
where for $t\ge2$ the bound holds if $c\sqrt[R]{q\ln q}+2R\le q+1$.

  \item[(iii)] \emph{(Asymptotic upper bounds)}

   Let $q> Q_{\lambda,R}$ be large enough. Then the bounds \eqref{eq3:boundConstAsym} and  \eqref{eq3:boundConstAsym2} hold.
\begin{equation}\label{eq3:boundConstAsym}
\begin{split}
&\ell_q(r,R)=s_q(r-1,R-1)<cq^{(r-R)/R}\cdot\sqrt[R]{\ln q}+2Rq^{t-1}+R\theta_{t-1,q}\\
&<\left(c+R\frac{2+q/(q-1)}{\sqrt[R]{q\ln q}}\right)q^{(r-R)/R}\cdot\sqrt[R]{\ln q},~r=tR+1,~t\ge1,~c\in\{D^{\min}_R, D_{\lambda,R}\},
\end{split}
\end{equation}
where for $t\ge2$ the bounds hold if $c\sqrt[R]{q\ln q}+2R\le q+1$.
\begin{equation}\label{eq3:boundConstAsym2}
\ell_q(r,R)=s_q(r-1,R-1)<3.43Rq^{(r-R)/R}\cdot\sqrt[R]{\ln q},~r=tR+1,~t\ge1,
\end{equation}
where for $t\ge2$ the bounds hold if $D^{\min}_R\sqrt[R]{q\ln q}+2R\le q+1$.
\end{description}
\end{theorem}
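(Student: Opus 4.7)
The plan is to reduce the theorem to the base case $t=1$ handled by Construction A, then propagate via the $q^m$-concatenating constructions, and finally extract the asymptotic constants by a straightforward calculus argument.

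First I would establish the base case: for $r=R+1$, $t=1$, Construction A (Section~\ref{sec:constrA}) produces an $(R-1)$-saturating set in $\PG(R,q)$ in a greedy step-by-step fashion. The key size estimate, to be proved in Sections~\ref{sec:est_size}--\ref{sec:boundsR_R-1}, should yield
\begin{equation*}
s_q(R,R-1)<\Omega_{\lambda,R}(q)\cdot\sqrt[R]{q\ln q}\quad\text{for }q>Q_{\lambda,R},
\end{equation*}
which is exactly \eqref{eq3:boundDecrFun} at $t=1$ since $q^{(r-R)/R}\sqrt[R]{\ln q}=\sqrt[R]{q\ln q}$. The condition $q>Q_{\lambda,R}$ of \eqref{eq3:Qlx}--\eqref{eq3:Ups=1} serves precisely to guarantee $\Upsilon_{\lambda,R}(q)\le 1$ so that the denominator $2-1/q-\Upsilon_{\lambda,R}(q)$ in \eqref{eq3:Omega_def} is positive.

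Next I would pass from $t=1$ to arbitrary $t\ge 1$ by applying the $q^m$-concatenating lift-construction of Section~\ref{sec:qmconc}. Starting from an $[n_0,n_0-(R+1)]_qR$ code with $n_0$ bounded as above, the standard lift output has length
\begin{equation*}
n\le n_0\cdot q^{t-1}+2Rq^{t-1}+R\theta_{t-1,q},
\end{equation*}
which, after multiplying the base bound by $q^{t-1}=q^{(r-R)/R}/\sqrt[R]{q}$ and absorbing $\sqrt[R]{\ln q}$, yields precisely \eqref{eq3:boundDecrFun}. The technical side condition $C_{\lambda,R}\sqrt[R]{q\ln q}+2R\le q+1$ is the standard applicability hypothesis of the lift (the starting code length must not exceed $q+1$ or a similar geometric capacity), and needs only to be recorded.

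Part (ii) follows by monotonicity: since $\beta_{\lambda,R}(q)$ is increasing in $q$, $\Upsilon_{\lambda,R}(q)$ is decreasing in $q$ on $(Q_{\lambda,R},\infty)$, and $1/q$ is decreasing, the denominator in \eqref{eq3:Omega_def} increases with $q$, so $\Omega_{\lambda,R}(q)$ is a decreasing function of $q$ on that range. Hence $\Omega_{\lambda,R}(q)<\Omega_{\lambda,R}(Q_0)$ whenever $q>Q_0>Q_{\lambda,R}$, and the limiting value $\Omega_{\lambda,R}(Q_{\lambda,R}^+)$ is bounded above by the more explicit $C_{\lambda,R}$ of \eqref{eq3:constantclamb}, obtained by replacing the denominator $2-1/q-\Upsilon_{\lambda,R}(q)$ by its smaller value at $q=Q_{\lambda,R}$ with $\Upsilon_{\lambda,R}(Q_{\lambda,R})\le 1$.

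Finally, for part (iii) I would take $q\to\infty$ in $\Omega_{\lambda,R}(q)$: then $\beta_{\lambda,R}(q)\to\lambda$ and $\Upsilon_{\lambda,R}(q)\to 0$, so $\Omega_{\lambda,R}(q)\to\lambda+R\cdot R!/\lambda^{R-1}=D_{\lambda,R}$, giving \eqref{eq3:boundConstAsym} with $c=D_{\lambda,R}$. Optimizing in $\lambda$ by $d D_{\lambda,R}/d\lambda=0$ gives $\lambda^R=R(R-1)\cdot R!$, leading to $D^{\min}_R=(R/(R-1))\sqrt[R]{R(R-1)R!}$, which matches \eqref{eq3:constantDmin}. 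Combining with Lemma~\ref{lem3:Dmin} (which gives $D^{\min}_R<1.651R$ for $R\ge3$) and noting that the additive correction $R(2+q/(q-1))/\sqrt[R]{q\ln q}$ shrinks to $0$, one checks that for $q$ large enough the total factor is below $3.43R$, establishing \eqref{eq3:boundConstAsym2}. The main obstacle is not in this outer assembly but in the base case: proving the quantitative bound $s_q(R,R-1)<\Omega_{\lambda,R}(q)\sqrt[R]{q\ln q}$ requires a careful combinatorial/probabilistic accounting of how many points of $\PG(R,q)$ remain uncovered after each step of Construction~A and when the process must terminate, which is where the intricate expressions for $\beta_{\lambda,R}$, $\Upsilon_{\lambda,R}$, and $\Omega_{\lambda,R}$ originate.
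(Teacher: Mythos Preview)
Your proposal is essentially correct and follows the same architecture as the paper: the base case $t=1$ is Theorem~\ref{th6:t=1} (proved via Construction~A and the estimates of Sections~\ref{sec:est_size}--\ref{sec:boundsR_R-1}), the passage to $t\ge2$ is Proposition~\ref{prop8_lift_r=Rt+1}/Theorem~\ref{th88:inf_fam}, and parts (ii)--(iii) follow from the monotonicity and limit properties of $\Omega_{\lambda,R}(q)$ collected in Lemma~\ref{lem7:properties}. Two small points to tighten: the base-case bound actually carries an additive $+2R$ (Theorem~\ref{th6:t=1}(i) gives $s_q(R,R-1)<\Omega_{\lambda,R}(q)\sqrt[R]{q\ln q}+2R$), and the lift formula itself is $n=n_0q^{t-1}+R\theta_{t-1,q}$ (Proposition~\ref{prop8_lift_r=Rt+1}); the $2Rq^{t-1}$ term you wrote appears only after substituting the base bound for $n_0$. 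Also, for \eqref{eq3:boundConstAsym2} the paper does a bit more than let the correction tend to~$0$: it splits into $R<178$ (a finite computer check for $q\ge41$) and $R\ge178$ (using $D^{\min}_R<0.4R$ together with $\psi(q,R)<3.03$) to get the uniform constant $3.43$; your argument suffices for each fixed $R$ with $q$ large enough, which is all the statement claims, but be aware that the explicit constant requires this extra bookkeeping.
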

Note that Theorem \ref{th6:t=1} in Section \ref{sec:boundsR_R-1} is a version of Theorem \ref{th3:infin_fam} for $t=1$.

Also, for $r=tR+1$ we have $t-1=(r-R-1)/R$ and $q^{(r-R)/R}=q^{t-1}\sqrt[R]{q}$.

\section{New Construction A of $(R-1)$-saturating sets in $\PG(R,q)$, $R\ge3$}\label{sec:constrA}

In this section, for any $q$ and $R\ge3$, we describe a new construction of $(R-1)$-saturating sets in $\PG(R,q)$. The points of such an $n$-set (in homogeneous coordinates), treated as columns, form a parity check matrix of an $[n,n-(R+1)]_qR$ code. In future, this code will be used as a starting one for lift-constructions obtaining infinite families of covering codes with growing codimension $r=tR+1$, $t\ge1$, see Section~\ref{sec:qmconc}.

\subsection{An iterative process}\label{subsec:iterproc}

We say that a point $P$ of $\PG(R,q)$ is \emph{$\rho$-covered} by a point set $\K\subset\PG(R,q)$ if $P$ lies in a $\overline{\rho}$-subspace generated by $\overline{\rho}+1$ points of $\K$ in general positions where $\overline{\rho}\le \rho$, see Definition~\ref{def1_usual satur}.  In this case,  the set~$\K$ \emph{$\rho$-covers} the point $P$.
If $\rho$ is clear by the context, one can say simply ``covered" and ``covers" (resp. ``uncovered'' and ``does not cover'').

Assume that in $\PG(R,q)$, an $(R-1)$-saturating set
 is constructed in a step-by-step iterative process adding $R$ new points to the set in
every step.

Let $A_u$ be a point of $\PG(R,q)$, $u=1,\ldots,\theta_{R,q}$.
Let $L>R$ be an integer.
Let
\begin{equation}\label{eq4:K0}
 \Pc_0=\{A_1,\ldots,A_{L}\}\subset\PG(R,q),~L>R,
\end{equation}
be a \emph{starting} $L$-set  such that any $R$ of its points are in general position.

In $\PG(R,q)$, an arc is a set of points no $R + 1$ of which belong to the same hyperplane.
So, any $R + 1$ points of an arc are in general position. For example, a normal rational curve is projectively equivalent to the arc
$\{(1,t,t^2, \ldots,t^R):t\in \F_q\}\cup \{(0,\ldots,0, 1)\}$.
We can take any $L$ points of any arc as the starting $L$-set.

Let $w\ge0$ be an integer. Let $\K_w$ be the current $(L+wR)$-set obtained after the
$w$-th step of the process; we put $\K_0=\Pc_0$, see \eqref{eq4:K0}.  Denote by
\begin{equation*}
\Pc_{w+1}=\{A_{L+wR+1},A_{L+wR+2},\ldots,A_{L+wR+R}\} \subset\PG(R,q),~w\ge0,
\end{equation*}
an $R$-set of points that are added to $\K_{w}$ on the $(w+1)$-st step to obtain $\K_{w+1}$. So,
\begin{equation}\label{eq4:Kw}
  \K_w=\Pc_0\cup\Pc_1\cup\ldots\cup\Pc_w\subset\PG(R,q),~\#\K_w=L+wR,~w\ge0.
\end{equation}

Let $\U_w$ be the subset of $\PG(R,q)\setminus\K_w$ consisting of the points that are \emph{not $(R-1)$-covered} by~$\K_w$.

The set $\Pc_{w+1}$ is constructed as follows.

Let $\Pi_w\subset\PG(R,q)$ be a hyperplane skew to $\K_w$. In $\PG(R,q)$, a blocking set regarding hyperplanes contains $\ge\theta_{1,q}$
points \cite{BoseBurt}. Therefore the saturating set with the size proved in this paper cannot be a blocking set. So, the needed $\Pi_w$ exists.

We put $\Pc_{w+1}\subset\Pi_w$. In the first, we choose a \emph{``leading point''} $\A_{w+1}\in\Pi_w$ and put  $A_{L+wR+1}=\A_{w+1}$ (the choice of the leading point is considered below). Then we take the points $A_{L+wR+2},\ldots,A_{L+wR+R}$ of $\Pi_w$ such that all the points of
$\Pc_{w+1}$ are in general position. Thus, $\Pc_{w+1}$ covers all points of~$\Pi_w$.

The iterative process is as follows:
\begin{itemize}
  \item We assign the starting set $\Pc_0$ in accordance to \eqref{eq4:K0} and put $w=0$, $\K_0=\Pc_0$.
  \item In every $(w+1)$-th step, we should do the following actions:

-- choose the leading point $\A_{w+1}$;

-- construct the $R$-set $\Pc_{w+1}$;

-- form the new current set $\K_{w+1}=\K_w\cup\Pc_{w+1}$;

-- count (or make an estimate of) the value $\#\U_{w+1}$.

  \item The process ends when $ \#\U_{w+1}\le R$. Finally, in the last $(w+1)$-step, we add to $\K_w$ at most $R$ uncovered points to obtain an $(R-1)$-saturating set.
\end{itemize}

\subsection{The choice of the leading point}\label{subsec:leadpoint}

 Let $\Delta_{w+1}(\Pc_{w+1})$ be the number of new covered points in $\U_w$ after adding $\Pc_{w+1}$ to $\K_w$;
 \begin{equation}\label{eq4:Delta_def}
   \Delta_{w+1}(\Pc_{w+1})=\#\U_w-\#\U_{w+1}.
 \end{equation}
We denote $\delta_{w+1}(\A_{w+1})$ the number of new covered points in $\U_w\setminus\Pi_w$ after adding the leading point $\A_{w+1}=A_{L+wR+1}$ to $\K_w$.
We have
\begin{equation}
  \Delta_{w+1}(\Pc_{w+1})\ge\delta_{w+1}(\A_{w+1})+\#(\U_w\cap\Pi_w)\ge\delta_{w+1}(\A_{w+1}),\label{eq4:Delta>=delta}
\end{equation}
where the first sign ``$\ge$'' is associated with the fact that the inclusion of the points $A_{L+wR+2},\ldots,A_{L+wR+R}$ can add new covered points outside $\Pi_w$.

 Let $\mathbb{S}_w$ be the sum of the number of new covered points in $\U_w\setminus\Pi_w$ over all points $P$ of $\Pi_w$, i.e.
 \begin{equation}\label{eq4:Sw}
   \mathbb{S}_w=\sum_{P\in\Pi_w}\delta_{w+1}(P).
 \end{equation}
The average value $ \delta_{w+1}^\text{aver}$ of $\delta_{w+1}(P)$ over all points of $\Pi_w$ is
 \begin{equation}\label{eq4:aver}
    \delta_{w+1}^\text{aver}=\frac{\sum\limits_{P\in\Pi_w}\delta_{w+1}(P)}{\#\Pi_w}=
    \frac{\mathbb{S}_w}{\theta_{R-1,q}}.
 \end{equation}
 Obviously, there exists a point $\A_{w+1}\in\Pi_w$ such that
 \begin{equation}\label{eq4:delta>aver}
 \delta_{w+1}(\A_{w+1})\ge\delta_{w+1}^\text{aver}.
 \end{equation}

The point $\A_{w+1}\in\Pi_w$ providing \eqref{eq4:delta>aver} should be chosen as the leading one.

\subsection{Estimates of the average number $ \delta_{w+1}^\text{aver}$ of new covered points}

To make the estimates, we introduce and consider a number of subspaces.

We denote by $\dim(H)$ the dimension of a subspace $H$.

 \emph{We fix a point $B\in\U_w\setminus\Pi_w$.} So, $B\notin\Pi_w$.

We consider $\binom{L}{R-1}$ \emph{distinct} $(R-1)$-subsets consisting of \emph{distinct points} of\linebreak $\K_{0}=\Pc_0$. We denote such a subset by
$\D_j$ with
 \begin{equation*}
   \D_j\subset\K_{0},~\#\D_j=R-1,~j=1,\ldots,\binom{L}{R-1},~\D_u\ne\D_v \text{ if }u\ne v.
 \end{equation*}
  By the assumptions, all the points of $\D_j$ are in general position.
  Also, all the points of the $R$-set $\D_j\cup\{B\}$  are in general position, otherwise $B$ would be covered by $\K_{0}$. Thus, the points of $\D_j\cup\{B\}$ uniquely define a hyperplane, say $\Sigma^{(R-1)}_{j,B}$, such that
  \begin{equation*}
   \Sigma^{(R-1)}_{j,B}=\langle\D_j\cup\{B\}\rangle\subset\PG(R,q),~\dim(\Sigma^{(R-1)}_{j,B})=R-1,~\#\Sigma^{(R-1)}_{j,B}=\theta_{R-1,q}.
  \end{equation*}

We have $\Sigma^{(R-1)}_{j,B}\ne\Pi_w$, as $B\notin\Pi_w$. Thus,
  $\Sigma^{(R-1)}_{j,B}$ and $\Pi_w$ intersect. The intersection is an $(R-2)$-subspace, say $\Gamma^{(R-2)}_{j,B}$, such that
  \begin{equation*}
    \Gamma^{(R-2)}_{j,B}=\Sigma^{(R-1)}_{j,B}\cap\Pi_w,~\dim(\Gamma^{(R-2)}_{j,B})=R-2,~\#\Gamma^{(R-2)}_{j,B}=\theta_{R-2,q}.
  \end{equation*}
 Let $ \V^{(R-2)}_j$ be the $(R-2)$-subspace generated by the  points of~$\D_j$, i.e.
\begin{equation*}
 \V^{(R-2)}_j=\langle\D_j\rangle\subset\Sigma^{(R-1)}_{j,B},~\dim(\V^{(R-2)}_j)=R-2,~\#\V^{(R-2)}_j=\theta_{R-2,q}.
\end{equation*}
As the $(R-2)$-subspaces $\V^{(R-2)}_j$ and $\Gamma^{(R-2)}_{j,B}$ lie in the same hyperplane $\Sigma^{(R-1)}_{j,B}$, they meet in some $(R-3)$-subspace, say $\T^{(R-3)}_{j,B}$, such that
\begin{equation*}
  \T^{(R-3)}_{j,B}=\V^{(R-2)}_j\cap\Gamma^{(R-2)}_{j,B},~\dim(\T^{(R-3)}_{j,B})=R-3,~\T^{(R-3)}_{j,B}=\theta_{R-3,q}.
\end{equation*}
The points of $\T^{(R-3)}_{j,B}$ are not in general position with the points of $\D_j$. We denote
 \begin{equation}\label{eq5:wideGamma}
 \widehat{\Gamma}^{(R-2)}_{j,B}=\Gamma^{(R-2)}_{j,B}\setminus\T^{(R-3)}_{j,B}.
 \end{equation}
 Every point of $\widehat{\Gamma}^{(R-2)}_{j,B}$ is in general position with the points of $\D_j$; also,
 \begin{equation*}
   \#\widehat{\Gamma}^{(R-2)}_{j,B}=\theta_{R-2,q}-\theta_{R-3,q}=q^{R-2}.
 \end{equation*}
 By the construction, the $q^{R-2}$-set $\widehat{\Gamma}^{(R-2)}_{j,B}$ is the affine point set of the $(R-2)$-subspace $\Gamma^{(R-2)}_{j,B}$.

   Thus, the hyperplane
  $\Sigma^{(R-1)}_{j,B}=\langle\D_j\cup\{B\}\rangle$ is generated  $q^{R-2}$ times when we  add in sequence all the points of $\Pi_w$ to $\K_w$ for the calculation of $\mathbb{S}_w$, see \eqref{eq4:Sw}.

The same holds for all $\binom{L}{R-1}$ sets $\D_j$. Moreover, consider the sets $\D_u$ and $\D_v$ with $u\ne v$. We have $\D_u\ne\D_v$. The points of $\D_u\cup\{B\}$ (resp. $\D_v\cup\{B\}$) define a hyperplane $\Sigma^{(R-1)}_{u,B}$ (resp. $\Sigma^{(R-1)}_{v,B}$). No points of $\D_v\setminus(\D_u\cap\D_v)$ lie in $\Sigma^{(R-1)}_{u,B}$, otherwise $B$ would be $(R-1)$-covered by~$\K_{0}$. So, the hyperplanes
 $\Sigma^{(R-1)}_{u,B}$ and $\Sigma^{(R-1)}_{v,B}$ are distinct.  If the corresponding $(R-2)$-subspaces $\Gamma^{(R-2)}_{u,B}=\Sigma^{(R-1)}_{u,B}\cap\Pi_w$ and $\Gamma^{(R-2)}_{v,B}=\Sigma^{(R-1)}_{v,B}\cap\Pi_w$ coincide with each other then $\Sigma^{(R-1)}_{u,B}$ and $\Sigma^{(R-1)}_{v,B}$ have no common points outside $\Pi_w$, contradiction as $B\notin\Pi_w$. Thus, $\Gamma^{(R-2)}_{u,B}\ne\Gamma^{(R-2)}_{v,B}$.

So, we have proved that in $\Pi_w$ we have $\binom{L}{R-1}$ distinct $(R-2)$-subspaces $\Gamma^{(R-2)}_{j,B}$ in every of which
the $q^{R-2}$-set $\widehat{\Gamma}^{(R-2)}_{j,B}$ of affine points gives rise to hyperplanes containing $B$.

 Thus, for the calculation of $\mathbb{S}_w$, the point $B$ will be counted $\#\G_{w,B}$
  times where
\begin{equation*}
  \G_{w,B}=\bigcup\limits_{j=1}^{\binom{L}{R-1}}\widehat{\Gamma}^{(R-2)}_{j,B}.
\end{equation*}
 The same holds for all points of $\U_w\setminus\Pi_w$. Therefore,
 \begin{equation}\label{eq5:sum}
\mathbb{S}_w=\sum_{P\in\Pi_w}\delta_{w+1}(P)\ge \sum_{B\in\U_w\setminus\Pi_w}\#\G_{w,B}
 \end{equation}
 where the sign ``$\ge$''  is needed as, for $w\ge1$, there are $(R-1)$-sets, say $\widetilde{D}_j$, $j>1$, consisting of points in general position, with $\widetilde{D}_j\subset\K_w$ and $\widetilde{D}_j\not\subset\K_0$. For example, every set $\Pc_w$, $w\ge1$, contains $R$ sets $\widetilde{D}_j$. Such sets together with uncovered points of $\U_w\setminus\Pi_w$ generate hyperplanes (similar to $\Sigma^{(R-1)}_{j,B}$) increasing $\mathbb{S}_w$.

 By \eqref{eq4:aver}, \eqref{eq5:sum}, for the average value $ \delta_{w+1}^\text{aver}$ of $\delta_{w+1}(P)$  we have
  \begin{equation}\label{eq5:aver2}
\delta_{w+1}^\text{aver}= \frac{\sum\limits_{P\in\Pi_w}\delta_{w+1}(P)}{\theta_{R-1,q}}\ge
 \frac{\sum\limits_{B\in\U_w\setminus\Pi_w}\#\G_{w,B}}{\theta_{R-1,q}}.
 \end{equation}

  The values of $\#\G_{w,B}$ can be distinct for distinct points $B$. Also, in principle, $\#\G_{w,B}$ can depend on $w$. We denote
\begin{equation}\label{eq5:Gmin_def}
\#\G^{\min}= \min_{B\in\U_W\setminus\Pi_w,\,W=1,\ldots,w}\#\G_{W,B}.
\end{equation}
Below, in Lemma \ref{lem4}, for the estimates of $\#\G^{\min}$, we use only the set $\K_0=\Pc_0$. Therefore, really, our estimates of $\#\G^{\min}$ do not depend on $w$.
By \eqref{eq5:aver2}, \eqref{eq5:Gmin_def},
\begin{equation}\label{eq5:aver3}
\delta_{w+1}^\text{aver}\ge\frac{\#\G^{\min}\cdot\#\U_w\setminus\Pi_w}{\theta_{R-1,q}}.
 \end{equation}

\begin{lemma}\label{lem4}
Let  $\binom{L}{R-1}-1\le q$. The following holds:
 \begin{equation}\label{eq5:min>=a}
\#\G^{\min}\ge q^{R-3}\binom{L}{R-1}\left(q+\frac{1}{2}-\frac{1}{2}\binom{L}{R-1}\right).
   \end{equation}
\end{lemma}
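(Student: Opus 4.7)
The plan is to estimate $\#\G_{w,B}$ via the Bonferroni inequality $\bigl|\bigcup_j A_j\bigr|\ge \sum_j |A_j|-\sum_{j<k}|A_j\cap A_k|$. Writing $m\triangleq\binom{L}{R-1}$, and recalling that each affine set $\widehat{\Gamma}^{(R-2)}_{j,B}$ has exactly $q^{R-2}$ points, this yields
\begin{equation*}
\#\G_{w,B}\;\ge\; m\,q^{R-2}-\sum_{1\le j<k\le m}\#\bigl(\widehat{\Gamma}^{(R-2)}_{j,B}\cap\widehat{\Gamma}^{(R-2)}_{k,B}\bigr).
\end{equation*}
Hence the whole task reduces to proving the pairwise estimate
\begin{equation*}
\#\bigl(\widehat{\Gamma}^{(R-2)}_{j,B}\cap\widehat{\Gamma}^{(R-2)}_{k,B}\bigr)\;\le\; q^{R-3}\qquad\text{for every }j\ne k,
\end{equation*}
since plugging this back gives $m q^{R-2}-\binom{m}{2}q^{R-3}=q^{R-3}\,m\bigl(q+\tfrac{1}{2}-\tfrac{m}{2}\bigr)$, and the resulting estimate is uniform in $w$ and $B$, so taking the minimum as in \eqref{eq5:Gmin_def} delivers \eqref{eq5:min>=a}.

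The pairwise estimate is the real content, and I would prove it by a short dimension computation inside the hyperplane $\Pi_w$. Since $\Gamma^{(R-2)}_{j,B}$ and $\Gamma^{(R-2)}_{k,B}$ were shown in the paragraph preceding \eqref{eq5:sum} to be distinct $(R-2)$-subspaces of the $(R-1)$-dimensional $\Pi_w$, the Grassmann dimension formula forces them to meet in a single $(R-3)$-subspace $I_{jk}$ of size $\theta_{R-3,q}$. By \eqref{eq5:wideGamma},
\begin{equation*}
\widehat{\Gamma}^{(R-2)}_{j,B}\cap\widehat{\Gamma}^{(R-2)}_{k,B}\;=\;I_{jk}\setminus\bigl(\T^{(R-3)}_{j,B}\cup\T^{(R-3)}_{k,B}\bigr),
\end{equation*}
so it suffices to show that $\T^{(R-3)}_{j,B}\cup\T^{(R-3)}_{k,B}$ removes at least $\theta_{R-3,q}-q^{R-3}=\theta_{R-4,q}$ points from $I_{jk}$. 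Inside the $(R-2)$-subspace $\Gamma^{(R-2)}_{j,B}$ we have two $(R-3)$-subspaces $I_{jk}$ and $\T^{(R-3)}_{j,B}$; either they coincide, in which case $I_{jk}\subseteq\T^{(R-3)}_{j,B}$ and the affine intersection is empty, or they meet in an $(R-4)$-subspace of $I_{jk}$ of size $\theta_{R-4,q}$. Applying the same dichotomy to $\T^{(R-3)}_{k,B}$ and forming the union of the resulting $(R-4)$-subspaces inside $I_{jk}$ (whose cardinality is at least $\theta_{R-4,q}$ whether they coincide or not) immediately yields the required pairwise bound.

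The hypothesis $\binom{L}{R-1}-1\le q$ plays no role in the geometric step above; it is used only to guarantee that the final bound in \eqref{eq5:min>=a} is non-trivially positive, since under it the right-hand side is actually at least $\tfrac{1}{2}m q^{R-2}$. The main obstacle I anticipate is the careful bookkeeping of the degenerate subcases in the dimension analysis---in particular those in which $\T^{(R-3)}_{j,B}$ or $\T^{(R-3)}_{k,B}$ coincides with $I_{jk}$, or in which $\T^{(R-3)}_{j,B}\cap I_{jk}=\T^{(R-3)}_{k,B}\cap I_{jk}$, since the latter configuration realises equality $q^{R-3}$ in the pairwise estimate and thus shows that the Bonferroni loss cannot be sharpened without additional structural information on $\Pc_0$.
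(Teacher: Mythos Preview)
Your proposal is correct and follows essentially the same route as the paper: bound each pairwise intersection $\widehat{\Gamma}^{(R-2)}_{j,B}\cap\widehat{\Gamma}^{(R-2)}_{k,B}$ by $q^{R-3}$ (the paper states this in one line as a fact about affine parts of distinct $(R-2)$-subspaces, whereas you supply the Grassmann-dimension details), and then apply the Bonferroni/inclusion--exclusion truncation, which the paper phrases as a ``worst case'' count. Your remark that the hypothesis $\binom{L}{R-1}-1\le q$ is needed only for positivity is accurate; the paper invokes it to argue that the worst-case configuration is geometrically realisable, but your direct Bonferroni argument shows that this check is unnecessary for the inequality itself.
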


\begin{proof}
For some $n$, we consider $n$ of the $q^{R-2}$-sets $\widehat{\Gamma}^{(R-2)}_{j,B}$ of \eqref{eq5:wideGamma}. All the sets are distinct; in fact, if
 $\widehat{\Gamma}^{(R-2)}_{u,B} = \widehat{\Gamma}^{(R-2)}_{v,B} , u\ne v$, then $\widehat{\Gamma}^{(R-2)}_{u,B} \subset  \Gamma^{(R-2)}_{u,B}\cap\Gamma^{(R-2)}_{v,B}$ that implies
 $q^{R-2} =  \#\widehat{\Gamma}^{(R-2)}_{u,B} < \#(\Gamma^{(R-2)}_{u,B}\cap\Gamma^{(R-2)}_{v,B})=\theta_{R-3,q} $, contradiction.

 As $\widehat{\Gamma}^{(R-2)}_{u,B}$ and $\widehat{\Gamma}^{(R-2)}_{v,B}$ are the affine point sets of the distinct $(R-2)$-spaces, they have at most $q^{R-3}$ points in common, i.e. $\#(\widehat{\Gamma}^{(R-2)}_{u,B} \cap \widehat{\Gamma}^{(R-2)}_{v,B}) \leq q^{R-3}$.

Assume that $\#(\widehat{\Gamma}^{(R-2)}_{u,B} \cap \widehat{\Gamma}^{(R-2)}_{v,B}) = q^{R-3}$, for all pairs $(u,v)$, and that, in every set $\widehat{\Gamma}^{(R-2)}_{j,B}$, all the intersection points are distinct; it is the worst case  for $\#\G_{w,B}$.

In every set $\widehat{\Gamma}^{(R-2)}_{j,B}$, the number of the affine point sets  intersecting it is $n-1$ and the number of the intersection points  is $(n-1)q^{R-3}$. As $q^{R-2}-(n-1)q^{R-3}$ must be $\ge0$, the considered case is possible if $n-1\le q$.

In all $n$ sets $\widehat{\Gamma}^{(R-2)}_{j,B}$, the total
number of the intersection points  is $n(n-1)q^{R-3}$. The total number $\#\G(n)$ of distinct points in the union $\G(n)=\bigcup_{j=1}^n\widehat{\Gamma}^{(R-2)}_{j,B}$ is $\#\G(n)=nq^{R-2}-\frac{1}{2}n(n-1)q^{R-3}$ where $q^{R-2}=\#\widehat{\Gamma}^{(R-2)}_{j,B}$ and we need the factor~$\frac{1}{2}$ in order to calculate the meeting points exactly one time.

Finally, we put $n=\binom{L}{R-1}$.
\end{proof}

\begin{remark}
The condition $\binom{L}{R-1}-1\le q$ is used below and gives rise that our estimates work for $q>Q_{\lambda,R}$, see \eqref{eq3:Qlx} and Theorems \ref{th3:infin_fam} and \ref{th6:t=1}. In principle, we could slightly change the proof of Lemma~\ref{lem4} and put either $n=\binom{L}{R-1}$ if $\binom{L}{R-1}-1\le q$ or $n=q+1$ if $\binom{L}{R-1}-1> q$. This gives the estimate
\begin{equation}\label{eq4:newEst}
 \#\G^{\min}\ge\left\{
 \begin{array}{c}
q^{R-3}\binom{L}{R-1}\left(q+\frac{1}{2}-\frac{1}{2}\binom{L}{R-1}\right)\text{ if }  \binom{L}{R-1}-1\le q\smallskip\\
                      \frac{1}{2}(q^{R-1}+q^{R-2})\text{ if }\binom{L}{R-1}-1> q
\end{array}
  \right..
\end{equation}
On the base of \eqref{eq4:newEst}, upper bounds for $q<Q_{\lambda,R}$ could be obtained. We do not it for the sake of simplicity. We hope investigate the case $q<Q_{\lambda,R}$ in future works.
\end{remark}

\section{Estimates of sizes of the saturating sets obtained by  Construction~A}\label{sec:est_size}

\begin{lemma}
  For the number $\#\U_{w+1}$ of uncovered points after the $(w+1)$-st step of the iterative process, we have
\begin{equation}\label{eq6:Rwp1}
 \#\U_{w+1}\le q^R\left(1- \frac{\#\G^{\min}}{\theta_{R-1,q}}\right)^{w+1}.
\end{equation}
\end{lemma}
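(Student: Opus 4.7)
The plan is to set up a one-step recursion for $\#\U_{w+1}$ in terms of $\#\U_w$ and then iterate. The core observation is that the hyperplane $\Pi_w$ is entirely covered by $\Pc_{w+1}$, so every point of $\U_w\cap\Pi_w$ becomes covered for free in the $(w{+}1)$-st step. This, together with the choice of the leading point $\A_{w+1}$, will produce a contraction factor of $1-\#\G^{\min}/\theta_{R-1,q}$ applied to $\#(\U_w\setminus\Pi_w)$.

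First, I would write $\#\U_w=\#(\U_w\cap\Pi_w)+\#(\U_w\setminus\Pi_w)$ and recall from \eqref{eq4:Delta_def} and \eqref{eq4:Delta>=delta} that
\begin{equation*}
\#\U_{w+1}=\#\U_w-\Delta_{w+1}(\Pc_{w+1})\le \#\U_w-\#(\U_w\cap\Pi_w)-\delta_{w+1}(\A_{w+1})=\#(\U_w\setminus\Pi_w)-\delta_{w+1}(\A_{w+1}).
\end{equation*}
Since the leading point is chosen to satisfy \eqref{eq4:delta>aver}, combined with the averaging bound \eqref{eq5:aver3} this yields
\begin{equation*}
\#\U_{w+1}\le \#(\U_w\setminus\Pi_w)-\frac{\#\G^{\min}\cdot\#(\U_w\setminus\Pi_w)}{\theta_{R-1,q}}=\#(\U_w\setminus\Pi_w)\left(1-\frac{\#\G^{\min}}{\theta_{R-1,q}}\right).
\end{equation*}

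Next, I would use the trivial estimate $\#(\U_w\setminus\Pi_w)\le \#(\PG(R,q)\setminus\Pi_w)=\theta_{R,q}-\theta_{R-1,q}=q^R$, valid for every $w\ge0$ (in particular for $w=0$ regardless of how large $\#\U_0$ itself is). Thus for the base case,
\begin{equation*}
\#\U_1\le q^R\left(1-\frac{\#\G^{\min}}{\theta_{R-1,q}}\right).
\end{equation*}
For the inductive step, assume the stated bound holds for some $w\ge1$. Then, since $\#(\U_w\setminus\Pi_w)\le\#\U_w$, the one-step inequality gives
\begin{equation*}
\#\U_{w+1}\le \#\U_w\left(1-\frac{\#\G^{\min}}{\theta_{R-1,q}}\right)\le q^R\left(1-\frac{\#\G^{\min}}{\theta_{R-1,q}}\right)^{w+1},
\end{equation*}
which closes the induction and yields \eqref{eq6:Rwp1}.

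There is no real obstacle in this proof; it is a routine geometric-sum argument. The only conceptual point worth highlighting is that the factor $q^R$ (rather than $\theta_{R,q}$) arises because after the very first step one can already discard the hyperplane $\Pi_0$ from consideration; without this observation one would pick up the extra summand $\theta_{R-1,q}$. Both steps of the recursion rely on the fact that $\Pi_w$ is skew to $\K_w$ (which was established in Subsection~\ref{subsec:iterproc}), so that $\Pi_w$ genuinely exists and $\Pc_{w+1}\subset\Pi_w$ covers all of $\Pi_w$.
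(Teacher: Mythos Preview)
Your argument is correct and follows essentially the same route as the paper: establish the one-step contraction
\[
\#\U_{w+1}\le \#(\U_w\setminus\Pi_w)\Bigl(1-\tfrac{\#\G^{\min}}{\theta_{R-1,q}}\Bigr),
\]
then iterate. Two small remarks. First, your inductive step replaces $\#(\U_w\setminus\Pi_w)$ by $\#\U_w$ on the right-hand side; this is only valid if the factor $1-\#\G^{\min}/\theta_{R-1,q}$ is nonnegative. The paper explicitly checks $\#\G^{\min}/\theta_{R-1,q}\le 1$ (it follows immediately from $\G_{w,B}\subseteq\Pi_w$, hence $\#\G^{\min}\le\theta_{R-1,q}$); you should state this. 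Second, your base case differs slightly from the paper's: you use the trivial bound $\#(\U_0\setminus\Pi_0)\le\#(\PG(R,q)\setminus\Pi_0)=q^R$, whereas the paper bounds $\#\U_0\le q^R$ directly by observing that any $R$ points of $\K_0$ span a hyperplane which is therefore already covered. Both are fine; yours is in fact a touch more elementary since it does not invoke the structure of $\K_0$.
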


\begin{proof}
By \eqref{eq4:Delta_def}, \eqref{eq4:Delta>=delta}, \eqref{eq4:delta>aver}, \eqref{eq5:aver2}, \eqref{eq5:aver3}, we have
 \begin{equation*}
 \begin{split}
\Delta_{w+1}(\Pc_{w+1})=\#\U_w-\#\U_{w+1}&=\#\U_w\setminus\Pi_w+\#(\U_w\cap\Pi_w)-\#\U_{w+1}\\
 &\ge\frac{\#\G^{\min}\cdot\#\U_w\setminus\Pi_w}{\theta_{R-1,q}}+\#(\U_w\cap\Pi_w),
 \end{split}
 \end{equation*}
 where $\#\G^{\min}\cdot\#\U_w\setminus\Pi_w$ is the lower bound of $\sum_{B\in\U_w\setminus\Pi_w}\#\G_{w,B}$, see \eqref{eq5:aver2}. Therefore, $(\#\G^{\min}\cdot\#\U_w\setminus\Pi_w)/\theta_{R-1,q}$ is the lower bound of the number of the new covered points in $\U_w\setminus\Pi_w$. It follows that $\#\G^{\min}/\theta_{R-1,q}\le1$, as the new covered points in the set $\U_w\setminus\Pi_w$ are a subset of it that implies $(\#\G^{\min}\cdot\#\U_w\setminus\Pi_w)/\theta_{R-1,q}\le\#\U_w\setminus\Pi_w$. The summand $\#(\U_w\cap\Pi_w)$ takes into account that $\Pc_{w+1}$ covers all points of $\Pi_w$, see Subsection \ref{subsec:iterproc}.

 As $\#\G^{\min}/\theta_{R-1,q}\le1$ and $\#\U_w=\#\U_w\setminus\Pi_w+\#(\U_w\cap\Pi_w)$, we obtain
 \begin{equation*}
\Delta_{w+1}(\Pc_{w+1}) \ge\frac{\#\G^{\min}\cdot\#\U_w}{\theta_{R-1,q}};
 \end{equation*}
 \begin{equation} \label{eq6:Rwp1_Rw}
    \#\U_{w+1}\le\#\U_w-\frac{\#\G^{\min}\cdot\#\U_w}{\theta_{R-1,q}}=
    \#\U_w\left(1- \frac{\#\G^{\min}}{\theta_{R-1,q}}\right).
 \end{equation}

As any $R$ points of $\K_{0}$ are in general position, we have
\begin{equation*}
\#\U_{0}\le \theta_{R,q}-\theta_{R-1,q}=q^R.
\end{equation*}
Starting from $\#\U_{0}$ and iteratively applying \eqref{eq6:Rwp1_Rw}, we obtain the assertion.
\end{proof}

By Notation \ref{not1}, $\lambda$ is a positive constant that does not depend on $q$.  Let
\begin{equation}\label{eq6:w0}
L=\left\lfloor \lambda \sqrt[R]{q\ln q}\right\rfloor
\end{equation}
that implies
\begin{equation}\label{eq6:w0b}
 \lambda\sqrt[R]{q\ln q}-1< L\le \lambda\sqrt[R]{q\ln q}.
 \end{equation}
  From \eqref{eq3:beta0_def} and \eqref{eq6:w0b} we have
  \begin{equation}\label{eq6:L-Rp1>=}
  L-R+1\le\beta_{\lambda,R}(q) \sqrt[R]{q\ln q}<  L-R+2.
  \end{equation}

We denote
 \begin{equation}\label{eq6:widePhi_def}
  \Phi_{\lambda,R}^*(q)=\frac{2q}{2q-1-\binom{L}{R-1}}\, .
 \end{equation}

 \begin{lemma}
 Let  $\binom{L}{R-1}-1\le q$. The following holds:
\begin{equation}\label{eq6:estim0}
 \left(1-  \frac{\#\G^{\min}}{\theta_{R-1,q}}\right)^{w+1}<\exp\left(-\frac{(w+1)\binom{L}{R-1}}
{q\Phi_{\lambda,R}^*(q)}\right).
\end{equation}
 \end{lemma}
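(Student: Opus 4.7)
The plan is to chain the elementary inequality $1-x \le e^{-x}$ (strict for $x > 0$) with the lower bound on $\#\G^{\min}$ from Lemma \ref{lem4}. Setting $x = \#\G^{\min}/\theta_{R-1,q}$ and raising to the $(w+1)$-st power gives
\begin{equation*}
\left(1-\frac{\#\G^{\min}}{\theta_{R-1,q}}\right)^{w+1} \le \exp\left(-\frac{(w+1)\#\G^{\min}}{\theta_{R-1,q}}\right),
\end{equation*}
with strict inequality as soon as $\#\G^{\min} > 0$. The task therefore reduces to verifying
\begin{equation*}
\frac{\#\G^{\min}}{\theta_{R-1,q}} \ge \frac{\binom{L}{R-1}}{q\,\Phi_{\lambda,R}^*(q)} = \frac{\binom{L}{R-1}\bigl(2q-1-\binom{L}{R-1}\bigr)}{2q^2},
\end{equation*}
where the last equality is simply the definition \eqref{eq6:widePhi_def} of $\Phi_{\lambda,R}^*(q)$.

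By Lemma \ref{lem4}, whose hypothesis $\binom{L}{R-1}-1 \le q$ is precisely the one assumed here, I have $\#\G^{\min} \ge \tfrac{1}{2}q^{R-3}\binom{L}{R-1}\bigl(2q+1-\binom{L}{R-1}\bigr)$. Cancelling the common factor $\binom{L}{R-1}/2$ and multiplying through by $q^2$, the inequality to be proved collapses to
\begin{equation*}
\frac{q^{R-1}}{\theta_{R-1,q}}\bigl(2q+1-\binom{L}{R-1}\bigr) \ge 2q-1-\binom{L}{R-1}.
\end{equation*}

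Next I would bound the $\theta_{R-1,q}$-ratio using the exact identity $\theta_{R-1,q} = (q^R-1)/(q-1)$, which at once yields $q^{R-1}/\theta_{R-1,q} \ge (q-1)/q$ (a one-line check since $q^R \ge q^R - 1$). Substituting this into the left-hand side reduces the remaining statement to the polynomial inequality $(q-1)\bigl(2q+1-\binom{L}{R-1}\bigr) \ge q\bigl(2q-1-\binom{L}{R-1}\bigr)$, which after expansion simplifies to $\binom{L}{R-1} \ge 1$, trivially true for $L \ge R-1$. No genuine obstacle is expected; the only fiddly point is the bookkeeping of the $\pm 1$ shift---Lemma \ref{lem4} produces $2q+1-\binom{L}{R-1}$ in the numerator while $\Phi_{\lambda,R}^*(q)$ carries $2q-1-\binom{L}{R-1}$, and this discrepancy is absorbed exactly by the $(q-1)/q$ gap between $q^{R-1}$ and $\theta_{R-1,q}$.
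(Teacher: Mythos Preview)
Your proof is correct and follows essentially the same approach as the paper: both combine $1-x<e^{-x}$ with Lemma~\ref{lem4} and the estimate $q^{R-1}/\theta_{R-1,q}\ge (q-1)/q$ (equivalently $\theta_{R-1,q}<q^{R}/(q-1)$), after which the claim reduces to the polynomial identity $(q-1)\bigl(2q+1-\binom{L}{R-1}\bigr)=q\bigl(2q-1-\binom{L}{R-1}\bigr)+\binom{L}{R-1}-1$ and the positivity of $\binom{L}{R-1}-1$. The only difference is presentational---the paper writes the argument as one long chain of displayed inequalities, whereas you isolate the key inequality $\#\G^{\min}/\theta_{R-1,q}\ge \binom{L}{R-1}/\bigl(q\Phi_{\lambda,R}^{*}(q)\bigr)$ first and then verify it.
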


 \begin{proof}
 By the inequality $1-x\le\exp(-x)$ and by \eqref{eq5:min>=a}, if $\binom{L}{R-1}-1\le q$ we have
\begin{equation*}
\begin{split}
 & \left(1-  \frac{\#\G^{\min}}{\theta_{R-1,q}}\right)^{w+1}<\exp\left(-\frac{(w+1)\cdot\#\G^{\min}}{\theta_{R-1,q}}\right)\\
& <\exp\left(-(w+1)q^{R-3}\binom{L}{R-1}\left(q+\frac{1}{2}-\frac{1}{2}\binom{L}{R-1}\right)
\frac{q-1}{q^{R}-1}\right)\\
&<\exp\left(-(w+1)q^{R-3}\binom{L}{R-1}\left(2q+1-\binom{L}{R-1}\right)
\frac{q-1}{2q^{R}}\right) \\
&<\exp\left(-(w+1)\binom{L}{R-1}\left(2q^2-q-q\binom{L}{R-1}+\binom{L}{R-1}-1\right)
\frac{1}{2q^{3}}\right) \\
&<\exp\left(-(w+1)\binom{L}{R-1}\left(2q^2-q-q\binom{L}{R-1}\right)
\frac{1}{2q^{3}}\right)
\end{split}
\end{equation*}
where the last transformation uses that, by \eqref{eq4:K0}, $L>R$ and $\binom{L}{R-1}-1>0$.
Therefore, removing $\binom{L}{R-1}-1$ we obtain the inequality ``$<\exp\left(-\ldots\right)$''.
 \end{proof}

\begin{proposition}\label{prop6}
Let $\binom{L}{R-1}-1\le q.$  Then the  value
  \begin{equation}\label{eq6:w>=}
   w\ge \frac{R!}{\beta_{\lambda,R}^{R-1}(q)}\Phi_{\lambda,R}^*(q)\sqrt[R]{q\ln q}-1.
  \end{equation}
  satisfies the inequality $\#\U_{w+1}\le R$.
\end{proposition}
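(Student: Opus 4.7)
The plan is to feed the hypothesized lower bound on $w$ directly into the exponential estimate on $\#\U_{w+1}$ obtained by chaining the two previous lemmas, and then verify that the resulting exponent dominates $R\ln q$. Concretely, combining (6.1) with (6.8) gives
\begin{equation*}
\#\U_{w+1} < q^R\exp\left(-\frac{(w+1)\binom{L}{R-1}}{q\Phi_{\lambda,R}^*(q)}\right),
\end{equation*}
so it is enough to show that under the stated condition on $w$ the exponent is at least $R\ln q$; that immediately yields $\#\U_{w+1}<q^R\cdot q^{-R}=1$, and since $\#\U_{w+1}$ is an integer, $\#\U_{w+1}=0\le R$. In particular the conclusion is actually stronger than what the proposition claims; the slack presumably reflects a safety margin for the later applications.

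Next, I would lower-bound the binomial coefficient in the most transparent way: since $\binom{L}{R-1}=\frac{L(L-1)\cdots(L-R+2)}{(R-1)!}$, replacing each of the $R-1$ factors by the smallest one gives $\binom{L}{R-1}\ge (L-R+2)^{R-1}/(R-1)!$. The second inequality in (6.6) yields $L-R+2>\beta_{\lambda,R}(q)\sqrt[R]{q\ln q}$, hence
\begin{equation*}
\binom{L}{R-1} > \frac{\beta_{\lambda,R}^{R-1}(q)\,(q\ln q)^{(R-1)/R}}{(R-1)!}.
\end{equation*}
This is the only geometric input needed; everything else is arithmetic.

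Finally, substituting the assumed lower bound $w+1\ge \frac{R!}{\beta_{\lambda,R}^{R-1}(q)}\Phi_{\lambda,R}^*(q)\sqrt[R]{q\ln q}$ together with the preceding estimate for $\binom{L}{R-1}$ into the exponent, the factors collapse cleanly: the two copies of $\beta_{\lambda,R}^{R-1}(q)$ cancel, $R!/(R-1)!=R$, the $\Phi_{\lambda,R}^*(q)$ in the numerator cancels with the one in the denominator, and the product $(q\ln q)^{1/R}\cdot(q\ln q)^{(R-1)/R}=q\ln q$ absorbs the remaining $q$ downstairs, leaving exactly $R\ln q$. The hypothesis $\binom{L}{R-1}-1\le q$ is invoked precisely so that Lemma~\ref{lem4} (hence (6.8)) applies. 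The main (and essentially the only) obstacle is bookkeeping: one must take care that the strict inequality coming from (6.8) and the strict inequality in (6.6) are composed correctly with the $\ge$ appearing in the lower bound on $w+1$, so that the final implication $\#\U_{w+1}<1$ is reached without losing a constant along the way.
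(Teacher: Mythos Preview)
Your proposal is correct and follows essentially the same route as the paper: both combine (6.1) with (6.8), lower-bound $\binom{L}{R-1}$ via $(L-R+2)^{R-1}/(R-1)!$ and (6.6), and check that the resulting exponent is at least $R\ln q$. The only cosmetic difference is that the paper runs the argument backwards (targeting $\exp(\cdot)\le R/q^R$ and then relaxing $R\ln q-\ln R$ to $R\ln q$), whereas you run it forwards and obtain the slightly sharper conclusion $\#\U_{w+1}<1$; your remark about the ``safety margin'' corresponds exactly to the paper's discarded $\ln R$ term.
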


\begin{proof}
  By \eqref{eq6:Rwp1}, \eqref{eq6:estim0}, to prove $\#\U_{w+1}\le R$ it is sufficient to find $w$ such that
\begin{equation*}
  \exp\left(-\frac{(w+1)\binom{L}{R-1}}{q\Phi_{\lambda,R}^*(q)}\right)\le\frac{R}{q^R}.
\end{equation*}
Using \eqref{eq6:L-Rp1>=}, we obtain
\begin{equation*}
\begin{split}
\exp\left(-\frac{(w+1)\binom{L}{R-1}}{q\Phi_{\lambda,R}^*(q)}\right)& <\exp\left(-\frac{(w+1)(L-R+2)^{R-1}}{(R-1)!\cdot q\Phi_{\lambda,R}^*(q)}\right)\\
 &<\exp\left(-\frac{(w+1)\left(\beta_{\lambda,R}(q)\sqrt[R]{q\ln q}\,\right)^{R-1}}{(R-1)!\cdot q\Phi_{\lambda,R}^*(q)}\right) \le\frac{R}{q^R}.
\end{split}
\end{equation*}
Taking the logarithm of both the parts of the last inequality, we have
\begin{equation*}
 \frac{(w+1)\left(\beta_{\lambda,R}(q)\sqrt[R]{q\ln q}\,\right)^{R-1}}{(R-1)!\cdot q\Phi_{\lambda,R}^*(q)}\ge R\ln q-\ln R;
\end{equation*}
\begin{equation*}
 w\ge(R\ln q-\ln R )\frac{(R-1)!\cdot q\Phi_{\lambda,R}^*(q)}{\left(\beta_{\lambda,R}(q)\sqrt[R]{q\ln q}\,\right)^{R-1}}-1.
\end{equation*}
If $a\ge R\ln q$ then $a\ge R\ln q-\ln R$. Therefore we may use
the inequality
\begin{equation*}
 w\ge R\ln q\frac{(R-1)!\cdot q\Phi_{\lambda,R}^*(q)}{\left(\beta_{\lambda,R}(q)\sqrt[R]{q\ln q}\,\right)^{R-1}}-1
\end{equation*}
which slightly worsens our estimates but simplifies the transformations. Finally,
\begin{equation*}
 w\ge \frac{R!\cdot\Phi_{\lambda,R}^*(q)q\ln q}{\beta_{\lambda,R}^{R-1}(q)\sqrt[R]{q^{R-1}\ln^{R-1} q}}-1=\frac{R!}{\beta_{\lambda,R}^{R-1}(q)}\Phi_{\lambda,R}^*(q)\sqrt[R]{q\ln q}-1. \qedhere
\end{equation*}
\end{proof}

We denote, see \eqref{eq3:Omega_def} and \eqref{eq6:widePhi_def},
\begin{equation}\label{eq6:wideOmega_def}
 \Omega_{\lambda,R}^*(q)=\lambda+\frac{R\cdot R!}{\beta_{\lambda,R}^{R-1}(q)}\Phi_{\lambda,R}^*(q).
\end{equation}

\begin{theorem}
  In  $\PG(R,q)$, for the size $s^\text{A}_{R-1,q}$ of the $(R-1)$-saturating set obtained by Construction A and for the smallest size $s_q(R,R-1)$ of an $(R-1)$-saturating set
the following upper bound holds:
\begin{equation}\label{eq6:sqRrho}
s_q(R,R-1)\le s^\text{A}_{R-1,q}<\Omega_{\lambda,R}^*(q) \sqrt[R]{q\ln q}+2R\text{ if }\binom{L}{R-1}-1\le q.
\end{equation}
\end{theorem}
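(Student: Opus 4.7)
The approach is to tally the size of the set built by Construction~A, phase by phase, and to invoke the estimates already established earlier in Section~\ref{sec:est_size}. By \eqref{eq4:Kw}, the set $\K_{w+1}$ produced after the $(w+1)$-st iterative step has cardinality $L+R(w+1)$: the starting set $\Pc_0$ contributes $L$ points, while each of the $w+1$ subsequent steps contributes exactly $R$ new points. Once the iterative phase terminates, i.e.\ once $\#\U_{w+1}\le R$, Construction~A completes the saturating set by appending those at most $R$ remaining uncovered points individually. Consequently
\begin{equation*}
s^{\text{A}}_{R-1,q}\le L+R(w+1)+R.
\end{equation*}

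I would then plug in the two quantitative ingredients already at hand. From \eqref{eq6:w0b} one has $L\le\lambda\sqrt[R]{q\ln q}$, with strict inequality whenever $\lambda\sqrt[R]{q\ln q}$ fails to be an integer. By Proposition~\ref{prop6}, choosing $w$ to be the smallest integer exceeding the right-hand side of \eqref{eq6:w>=} guarantees the termination condition $\#\U_{w+1}\le R$; the rounding-up costs at most $1$, so
\begin{equation*}
R(w+1)\le \frac{R\cdot R!}{\beta_{\lambda,R}^{R-1}(q)}\Phi_{\lambda,R}^*(q)\sqrt[R]{q\ln q}+R.
\end{equation*}
Summing the three contributions and grouping the leading terms exactly reproduces the constant $\Omega_{\lambda,R}^*(q)$ of \eqref{eq6:wideOmega_def}, while the two leftover $R$'s (one from rounding $w$ to an integer, one from the final clean-up) combine into the advertised $+2R$. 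The conclusion $s_q(R,R-1)\le s^{\text{A}}_{R-1,q}$ is then immediate from the minimality in the definition of $s_q(R,R-1)$, and the hypothesis $\binom{L}{R-1}-1\le q$ is inherited directly from Proposition~\ref{prop6}.

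I do not anticipate any substantive obstacle in executing this plan: every genuinely nontrivial step (the lower bound on $\#\G^{\min}$ in Lemma~\ref{lem4}, the exponential decay \eqref{eq6:Rwp1} for $\#\U_{w+1}$, and the resulting iteration count in Proposition~\ref{prop6}) has already been carried out in Sections~\ref{sec:constrA}--\ref{sec:est_size}. The only point that demands care is transparent accounting for the $+2R$ additive slack, which must not be confused with the dominant term of order $\sqrt[R]{q\ln q}$. Once this bookkeeping is presented cleanly, the theorem follows in a few lines.
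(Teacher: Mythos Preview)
Your proposal is correct and follows essentially the same route as the paper's own proof: both arguments count the final size as $L+(w+1)R+R$, bound $L$ via \eqref{eq6:w0b}, bound $w+1$ via Proposition~\ref{prop6} (absorbing the ceiling into a single $+R$), and then recognise the coefficient of $\sqrt[R]{q\ln q}$ as $\Omega_{\lambda,R}^*(q)$ from \eqref{eq6:wideOmega_def}. The only cosmetic difference is that the paper writes $w+1=\lceil\,\cdot\,\rceil$ directly, whereas you phrase it as ``the smallest integer exceeding the right-hand side of \eqref{eq6:w>=}''; both yield the same $+2R$ slack.
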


\begin{proof}
  By \eqref{eq4:Kw}, \eqref{eq6:w0}, \eqref{eq6:w0b}, and Proposition \ref{prop6} with \eqref{eq6:w>=}, Construction A obtains an $(R-1)$-saturating set of size
  \begin{equation*}
  \begin{split}
L+(w+1)R+R&=  L+\left(\left\lceil\frac{R!}{\beta_{\lambda,R}^{R-1}(q)}\Phi_{\lambda,R}^*(q)\sqrt[R]{q\ln q} \right\rceil\right)R+R\\
 &\le \lambda\sqrt[R]{q\ln q}+\left(\frac{R!}{\beta_{\lambda,R}^{R-1}(q)}\Phi_{\lambda,R}^*(q)\sqrt[R]{q\ln q}+1\right)R+R.
  \end{split}
  \end{equation*}
 Now the assertion follows due to \eqref{eq6:wideOmega_def}.
\end{proof}

\section{Upper bounds on the length function $\ell_q(R+1,R)$, $R\ge3$}\label{sec:boundsR_R-1}

 By \eqref{eq3:Upsilon_def}, \eqref{eq6:w0}, \eqref{eq6:w0b}, we have
   \begin{equation}\label{eq7:binomw0rho<=}
    \binom{L}{R-1}<\frac{L^{R-1}}{(R-1)!}\le\frac{\lambda^{R-1}}{(R-1)!}\sqrt[R]{q^{R-1}\ln^{R-1} q}=q\Upsilon_{\lambda,R}(q).
   \end{equation}

   \begin{lemma}\label{lem7:Upsilon<=1}
    The condition $\binom{L}{R-1}-1\le q$ holds if
    \begin{equation}
\Upsilon_{\lambda,R}(q)\le 1.
    \end{equation}
  \end{lemma}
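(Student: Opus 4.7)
The proof is essentially a direct chaining of the inequality displayed immediately before the lemma with the hypothesis on $\Upsilon_{\lambda,R}(q)$. The plan is as follows.

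First, I would invoke the inequality
\begin{equation*}
\binom{L}{R-1} < \frac{L^{R-1}}{(R-1)!} \le \frac{\lambda^{R-1}}{(R-1)!}\sqrt[R]{q^{R-1}\ln^{R-1} q} = q\,\Upsilon_{\lambda,R}(q),
\end{equation*}
which has already been established from the bound $L \le \lambda\sqrt[R]{q\ln q}$ coming from \eqref{eq6:w0b} and the definition \eqref{eq3:Upsilon_def} of $\Upsilon_{\lambda,R}(q)$.

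Next, I would apply the hypothesis $\Upsilon_{\lambda,R}(q) \le 1$ to conclude $\binom{L}{R-1} < q$. Since $\binom{L}{R-1}$ and $q$ are integers, the strict inequality upgrades to $\binom{L}{R-1} \le q$, and therefore $\binom{L}{R-1} - 1 \le q - 1 \le q$, which is exactly the conclusion.

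There is no real obstacle here: the lemma is a one-line consequence of an inequality already derived in the paragraph immediately preceding its statement, together with the trivial observation that binomial coefficients take integer values. The role of the lemma is organizational — it packages the condition $\binom{L}{R-1}-1 \le q$ (which is the hypothesis required by Lemma~\ref{lem4}, Proposition~\ref{prop6}, and the bound \eqref{eq6:sqRrho}) into the cleaner analytic condition $\Upsilon_{\lambda,R}(q)\le 1$, which in turn will be used in subsequent sections to define the threshold $Q_{\lambda,R}$ from \eqref{eq3:Qlx}–\eqref{eq3:Ups=1} beyond which Construction~A is guaranteed to work.
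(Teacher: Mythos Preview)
Your proof is correct and follows the same approach as the paper: both invoke the already-established inequality $\binom{L}{R-1} < q\,\Upsilon_{\lambda,R}(q)$ from \eqref{eq7:binomw0rho<=} and combine it with the hypothesis $\Upsilon_{\lambda,R}(q)\le 1$. The paper additionally remarks that the weaker condition $\Upsilon_{\lambda,R}(q)\le (q+1)/q$ would already suffice, choosing $\le 1$ only for simplicity; also note that your appeal to integrality is unnecessary, since $\binom{L}{R-1} < q$ already gives $\binom{L}{R-1}-1 < q-1 \le q$ directly.
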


  \begin{proof}
By \eqref{eq7:binomw0rho<=}, we have $\binom{L}{R-1}-1\le q$ if $\Upsilon_{\lambda,R}(q)\le(q+1)/q$. For simplicity of presentation we consider $\Upsilon_{\lambda,R}(q)\le 1$.
  \end{proof}

 \begin{lemma}\label{lem7:Ups_decreas}
 Let $\lambda$ and $R$ be fixed. Let $q>e^{R-1}$. Then $\Upsilon_{\lambda,R}(q)$ is a decreasing function of~$q$.
      \end{lemma}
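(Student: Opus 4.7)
The plan is to reduce the monotonicity of $\Upsilon_{\lambda,R}(q)$ to that of the simpler function $g(q) = \ln^{R-1}(q)/q$, and then apply elementary calculus.

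First, I would observe that by \eqref{eq3:Upsilon_def} we have
\[
\Upsilon_{\lambda,R}(q) = \frac{\lambda^{R-1}}{(R-1)!}\, g(q)^{1/R}, \qquad g(q) = \frac{\ln^{R-1}q}{q}.
\]
Since $\lambda^{R-1}/(R-1)!$ is a positive constant (independent of $q$) and the map $x \mapsto x^{1/R}$ is strictly increasing on $(0,\infty)$, it suffices to prove that $g$ is strictly decreasing on $(e^{R-1},\infty)$.

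Next, I would differentiate $g$ on $(1,\infty)$ (where $g(q) > 0$) using the quotient rule:
\[
g'(q) = \frac{(R-1)\ln^{R-2}(q)\cdot\frac{1}{q}\cdot q - \ln^{R-1}(q)}{q^2} = \frac{\ln^{R-2}(q)\bigl((R-1) - \ln q\bigr)}{q^2}.
\]
For $q > e^{R-1}$ we have $\ln q > R-1$, so the bracketed factor is strictly negative, and $\ln^{R-2}(q) > 0$ (this holds for $R \ge 2$; the case $R = 2$ gives the constant $1$, and for $R \ge 3$ we have $\ln q > R-1 \ge 2 > 0$). Hence $g'(q) < 0$ throughout $(e^{R-1},\infty)$, so $g$ is strictly decreasing on this interval, and therefore so is $\Upsilon_{\lambda,R}(q)$.

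There is no real obstacle here; the statement is a direct calculus computation. The only minor subtlety is handling the factor $\ln^{R-2}(q)$ uniformly in $R$, which is harmless since we only need $R \ge 2$ (and the excerpt's standing assumption is $R \ge 3$, so $\ln^{R-2}(q)$ is positive for all $q > 1$, a fortiori for $q > e^{R-1}$).
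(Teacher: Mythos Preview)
Your proof is correct and follows essentially the same approach as the paper: both reduce to showing that $\ln^{R-1}(q)/q$ has negative derivative for $q>e^{R-1}$, via the same quotient-rule computation. You have simply been a bit more explicit about why the positive constant prefactor and the $R$-th root preserve monotonicity, and about the sign of $\ln^{R-2}(q)$.
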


  \begin{proof} The derivative
  \begin{equation*}
  \left(\frac{\ln^{R-1} q}{q}\right)'=\frac{(R-1)\ln^{R-2}q-\ln^{R-1}q}{q^2}
  \end{equation*}
  is negative when $\ln q>R-1$.
  \end{proof}

\begin{corollary}\label{cor7}
  We have
  \begin{equation}
    Q_{\lambda,R}>e^{R-1}.
  \end{equation}
\end{corollary}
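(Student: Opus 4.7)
The plan is a direct case analysis on the definition \eqref{eq3:Qlx}, noting that the ceiling in the definition of $Q_{\lambda,R}$ makes the bound strict.

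First, I would split according to whether $\Upsilon_{\lambda,R}(e^{R-1}) \le 1$ or $\Upsilon_{\lambda,R}(e^{R-1}) > 1$. In the first case, $x = e^{R-1}$ and $Q_{\lambda,R} = \lceil e^{R-1}\rceil$. Since $R \ge 3$ throughout this section, $R-1 \ge 2$, and $e^{R-1}$ is transcendental (hence irrational), so the ceiling is strictly larger than its argument, giving $Q_{\lambda,R} > e^{R-1}$.

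In the second case, $x = y$ where $y$ is a solution of $\Upsilon_{\lambda,R}(y) = 1$ under the explicit constraint $y > e^{R-1}$ from \eqref{eq3:Ups=1}. Here $Q_{\lambda,R} = \lceil y\rceil \ge y > e^{R-1}$ trivially. Note that such a $y$ actually exists (and is unique in the required range): by Lemma \ref{lem7:Ups_decreas}, $\Upsilon_{\lambda,R}$ is continuous and strictly decreasing on $(e^{R-1},\infty)$, with $\Upsilon_{\lambda,R}(e^{R-1}) > 1$ by the case assumption and $\Upsilon_{\lambda,R}(q)\to 0$ as $q\to\infty$ (since $\ln^{R-1}q/q \to 0$), so by the intermediate value theorem there is a unique $y > e^{R-1}$ with $\Upsilon_{\lambda,R}(y)=1$.

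There is no real obstacle here; the point is simply that the ceiling operation combined with either (a) the irrationality of $e^{R-1}$ or (b) the explicit lower bound $y > e^{R-1}$ built into \eqref{eq3:Ups=1} forces the strict inequality. The only ``cheap'' step worth double-checking is the irrationality claim for $e^{R-1}$ when $R \ge 3$, which follows from the transcendence of $e$ (any positive integer power of $e$ is transcendental, and in particular irrational).
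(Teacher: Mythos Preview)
Your proof is correct and follows essentially the same route as the paper's one-line proof, which simply cites \eqref{eq3:Qlx}, \eqref{eq3:Ups=1}, and Lemma~\ref{lem7:Ups_decreas}. You have filled in the details the paper leaves implicit: in particular, the irrationality of $e^{R-1}$ (needed to make the ceiling strictly larger in the first case) and the existence of $y$ via the intermediate value theorem in the second case, which is presumably what the paper's reference to Lemma~\ref{lem7:Ups_decreas} is meant to cover.
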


\begin{proof}
  The assertion follows from \eqref{eq3:Qlx}, \eqref{eq3:Ups=1}, and Lemma \ref{lem7:Ups_decreas}.
\end{proof}

\begin{remark}
  Note that \eqref{eq3:Ups=1} is equivalent to the equation
\begin{equation*}
\ln^{R-1} y=y\left(\frac{(R-1)!}{\lambda^{R-1}}\right)^R~\text{ under the condition }y>e^{R-1}.
\end{equation*}
This equation is connected with Lambert $W$ function, see e.g. \cite{Lambert}.
\end{remark}

The following two lemmas are obvious.

\begin{lemma}\label{lem7:Ups_infin}
 Let $\lambda$ and $R$ be fixed. Then
    \begin{equation}\label{eq7:limUpsilon}
     \lim_{q\rightarrow\infty} \Upsilon_{\lambda,R}(q)=0.
    \end{equation}
      \end{lemma}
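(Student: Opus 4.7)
The claim is essentially elementary calculus: we need to show that the function
\[
\Upsilon_{\lambda,R}(q)=\frac{\lambda^{R-1}}{(R-1)!}\sqrt[R]{\frac{\ln^{R-1} q}{q}}
\]
tends to $0$ as $q\to\infty$, with $\lambda$ and $R$ held fixed. My plan is to reduce this to the well-known fact that powers of the logarithm grow strictly slower than any positive power of $q$.

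First I would strip off the constant prefactor $\lambda^{R-1}/(R-1)!$, which is independent of $q$ and therefore irrelevant to the limit, and the outer $R$-th root, which is a continuous function sending $0$ to $0$. This reduces the problem to showing
\[
\lim_{q\to\infty}\frac{\ln^{R-1} q}{q}=0.
\]
For this I would invoke the standard asymptotic $\ln q = o(q^{\alpha})$ valid for every $\alpha>0$; taking $\alpha=1/R$ (or any value smaller than $1/(R-1)$) gives $\ln^{R-1} q = o(q^{(R-1)/R})$, hence $\ln^{R-1} q / q = o(q^{-1/R})\to 0$. Alternatively one could apply L'Hôpital's rule $R-1$ times to $\ln^{R-1} q / q$.

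Putting the pieces back together: since $\ln^{R-1} q / q \to 0$, the $R$-th root tends to $0$ by continuity, and multiplying by the fixed constant $\lambda^{R-1}/(R-1)!$ preserves this limit, giving \eqref{eq7:limUpsilon}. There is really no obstacle here; the only thing worth noting is that Lemma \ref{lem7:Ups_decreas} already guarantees that $\Upsilon_{\lambda,R}(q)$ is monotonically decreasing for $q>e^{R-1}$, so the limit exists and equals the infimum, and it only remains to check that this infimum is $0$, which follows from the asymptotic above.
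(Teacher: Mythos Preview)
Your argument is correct; the paper itself gives no proof and simply declares this lemma (together with the next one) to be ``obvious.'' Your explicit reduction to the standard asymptotic $\ln^{R-1} q=o(q)$ is precisely the justification the paper is tacitly invoking, so there is nothing to compare.
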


  \begin{lemma}\label{lem7:beta0}
    Let $\lambda$ and $R$ be fixed. Then $\beta_{\lambda,R}(q)$ of \eqref{eq3:beta0_def} is an increasing function of $q$ and
    \begin{equation}\label{eq7:limbeta}
     \lim_{q\rightarrow\infty} \beta_{\lambda,R}(q)=\lambda.
    \end{equation}
  \end{lemma}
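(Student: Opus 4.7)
The plan is to verify both claims by direct elementary analysis of the function $\beta_{\lambda,R}(q) = \lambda - \frac{R-1}{\sqrt[R]{q\ln q}}$, treating $\lambda$ and $R$ (with $R\ge 3$) as fixed and working over the range of prime powers $q\ge 2$, where $q\ln q > 0$ so that $\beta_{\lambda,R}(q)$ is well-defined. Since the paper explicitly flags this lemma as obvious, my proposal amounts to a clean chain of monotonicity facts plus a routine limit computation.

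For the monotonicity claim, I would argue in three steps. First, because $\lambda$ is a constant, $\beta_{\lambda,R}(q)$ is increasing in $q$ if and only if $\frac{R-1}{\sqrt[R]{q\ln q}}$ is decreasing, i.e. if and only if $\sqrt[R]{q\ln q}$ is increasing. Second, the $R$-th root is a strictly increasing function on the positive reals, so it suffices to prove that $h(q) = q\ln q$ is strictly increasing. Third, $h'(q) = \ln q + 1 > 0$ for all $q > 1/e$, which in particular covers every prime power $q\ge 2$. Composing these implications delivers the desired strict monotonicity of $\beta_{\lambda,R}$.

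For the limit, I would simply observe that $\lim_{q\to\infty} q\ln q = +\infty$, hence $\lim_{q\to\infty} \sqrt[R]{q\ln q} = +\infty$, and therefore $\lim_{q\to\infty} \frac{R-1}{\sqrt[R]{q\ln q}} = 0$. Subtracting from the constant $\lambda$ yields $\lim_{q\to\infty} \beta_{\lambda,R}(q) = \lambda$, as required.

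The only ``obstacle'' here is cosmetic: there is genuinely no substance beyond the two elementary facts above, which is precisely why the author groups this with Lemma \ref{lem7:Ups_infin} under the heading that these lemmas are obvious. Care is needed only to present the chain of monotone implications in a single transparent line so that the reader is not distracted from the main iterative argument in Section \ref{sec:est_size} where $\beta_{\lambda,R}(q)$ is actually used.
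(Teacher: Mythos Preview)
Your proposal is correct and is exactly the natural way to fill in the details; the paper itself gives no proof, simply declaring this lemma (together with Lemma~\ref{lem7:Ups_infin}) to be obvious. Your three-step monotonicity chain and the routine limit computation are precisely what ``obvious'' is meant to encode here.
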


We introduce a function $\Phi_{\lambda,R}(q)$ of $q$, cf. \eqref{eq6:widePhi_def}, \eqref{eq7:binomw0rho<=}, and \eqref{eq3:Omega_def},
\begin{equation}\label{eq7:Phi_def}
\Phi_{\lambda,R}(q) =
\frac{2}{2-\frac{1}{q}-\frac{\lambda^{R-1}}{(R-1)!}\sqrt[R]{\frac{\ln^{R-1} q}{q}}}=\frac{2}{2-\frac{1}{q}-\Upsilon_{\lambda,R}(q)}.
\end{equation}

\begin{lemma}\label{lem7:properties}
 Let the values used here be as in Notation \ref{not1} and in \eqref{eq7:Phi_def}. The following holds.
  \begin{description}
    \item[(i)] Let $\lambda$, $R$ be fixed. Let $q>e^{R-1}$. Then $\Phi_{\lambda,R}(q)$ and $\Omega_{\lambda,R}(q)$ are decreasing functions of~$q$.

    \item[(ii)] Let $\lambda$, $R$ be fixed. Let $q>e^{R-1}$. Then $\Phi_{\lambda,R}^*(q)$ and $\Omega_{\lambda,R}^*(q)$ are upper bounded by decreasing functions of $q$ such that
\begin{equation}\label{eq7:Phi<1}
\Phi_{\lambda,R}^*(q)<\Phi_{\lambda,R}(q);
    \end{equation}
\begin{equation} \label{eq7:Omega<}
\Omega_{\lambda,R}^*(q)<\Omega_{\lambda,R}(q).
    \end{equation}

    \item[(iii)]  If $q> Q_{\lambda,R}$ then
\begin{equation}\label{eq7:Phi<2}
      \Phi_{\lambda,R}(q)^*<\Phi_{\lambda,R}(q)<\Phi_{\lambda,R}(Q_{\lambda,R})\le\frac{2Q_{\lambda,R}}{Q_{\lambda,R}-1};
     \end{equation}
\begin{equation}\label{eq7:beta>}
\beta_{\lambda,R}(q)>\beta_{\lambda,R}(Q_{\lambda,R})=\lambda-\frac{R-1}{\sqrt[R]{Q_{\lambda,R}\ln Q_{\lambda,R}}};
     \end{equation}
\begin{equation}\label{eq7:Omega<2}
\Omega_{\lambda,R}(q)<C_{\lambda,R}.
     \end{equation}

    \item[(iv)]
    Let $\lambda$, $R$ be fixed. Then
\begin{equation}\label{eq7:LimOmega}
 \lim_{q\rightarrow\infty}\Omega_{\lambda,R}(q)=\lambda+\frac{R\cdot R!}{\lambda^{R-1}}=D_{\lambda,R}.
\end{equation}
  \end{description}
\end{lemma}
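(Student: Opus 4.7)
The plan is to handle each of the four parts using the monotonicity/limit results already established (Lemmas~\ref{lem7:Ups_decreas}, \ref{lem7:Ups_infin}, \ref{lem7:beta0}, Corollary~\ref{cor7}) together with the inequality \eqref{eq7:binomw0rho<=} and the defining property of $Q_{\lambda,R}$ from \eqref{eq3:Qlx}--\eqref{eq3:Ups=1}. None of the arguments is intricate; the main task is to trace the monotonicity through the composite expressions defining $\Phi_{\lambda,R}$, $\Phi_{\lambda,R}^*$, $\Omega_{\lambda,R}$, and $\Omega_{\lambda,R}^*$.

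For part (i), I would rewrite $\Phi_{\lambda,R}(q)=2/D(q)$ with $D(q)=2-1/q-\Upsilon_{\lambda,R}(q)$; both $-1/q$ and (by Lemma~\ref{lem7:Ups_decreas}) $-\Upsilon_{\lambda,R}(q)$ are increasing for $q>e^{R-1}$, so $D$ is increasing and $\Phi_{\lambda,R}$ is decreasing. Writing $\Omega_{\lambda,R}(q)=\lambda+(R\cdot R!)\,\Phi_{\lambda,R}(q)/\beta_{\lambda,R}^{R-1}(q)$ and invoking Lemma~\ref{lem7:beta0} (so that $\beta_{\lambda,R}(q)$ is increasing and, once $q$ is large enough, positive, hence $1/\beta_{\lambda,R}^{R-1}(q)$ is decreasing) then yields that $\Omega_{\lambda,R}$ is decreasing as well.

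For part (ii), I would compare denominators: $\Phi_{\lambda,R}^*(q)=2/(2-1/q-\binom{L}{R-1}/q)$, and \eqref{eq7:binomw0rho<=} gives $\binom{L}{R-1}/q<\Upsilon_{\lambda,R}(q)$, so the denominator of $\Phi_{\lambda,R}^*$ strictly exceeds that of $\Phi_{\lambda,R}$, proving \eqref{eq7:Phi<1}; the bound \eqref{eq7:Omega<} is then immediate from the shared form of $\Omega_{\lambda,R}$ and $\Omega_{\lambda,R}^*$. For part (iii), the first inequality in \eqref{eq7:Phi<2} is part (ii), the middle one is part (i) applied at $q>Q_{\lambda,R}>e^{R-1}$ (using Corollary~\ref{cor7}), and the last one uses $\Upsilon_{\lambda,R}(Q_{\lambda,R})\le 1$ (how $Q_{\lambda,R}$ is defined) to bound $\Phi_{\lambda,R}(Q_{\lambda,R})\le 2/(1-1/Q_{\lambda,R})=2Q_{\lambda,R}/(Q_{\lambda,R}-1)$. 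The $\beta$-bound \eqref{eq7:beta>} is immediate from Lemma~\ref{lem7:beta0}, and \eqref{eq7:Omega<2} follows by combining part (i) with the just-obtained estimate on $\Phi_{\lambda,R}(Q_{\lambda,R})$ inside the definition of $\Omega_{\lambda,R}(Q_{\lambda,R})$, whose resulting value is exactly $C_{\lambda,R}$.

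For part (iv), Lemma~\ref{lem7:Ups_infin} forces $\Upsilon_{\lambda,R}(q)\to 0$, so $\Phi_{\lambda,R}(q)\to 2/(2-0-0)=1$, and Lemma~\ref{lem7:beta0} gives $\beta_{\lambda,R}(q)\to\lambda$; substituting these into the definition of $\Omega_{\lambda,R}$ yields the stated limit $D_{\lambda,R}$ in \eqref{eq7:LimOmega}. The only obstacle worth flagging is notational bookkeeping: one must check throughout that the denominator $2-1/q-\Upsilon_{\lambda,R}(q)$ and the base $\beta_{\lambda,R}(q)$ are positive on the claimed ranges, which is guaranteed once $q\ge Q_{\lambda,R}$ by $\Upsilon_{\lambda,R}(Q_{\lambda,R})\le 1$ and the monotonicity from Lemma~\ref{lem7:beta0}.
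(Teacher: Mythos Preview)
Your proposal is correct and follows essentially the same route as the paper's own proof: each part is handled by tracing the monotonicity of $\Upsilon_{\lambda,R}$ and $\beta_{\lambda,R}$ (via Lemmas~\ref{lem7:Ups_decreas}, \ref{lem7:Ups_infin}, \ref{lem7:beta0} and Corollary~\ref{cor7}) through the definitions of $\Phi_{\lambda,R}$, $\Phi_{\lambda,R}^*$, $\Omega_{\lambda,R}$, $\Omega_{\lambda,R}^*$, using \eqref{eq7:binomw0rho<=} for the comparison in~(ii) and the defining property $\Upsilon_{\lambda,R}(Q_{\lambda,R})\le 1$ for the last bound in~\eqref{eq7:Phi<2}. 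The only cosmetic point is that in~(iii) the upper bound for $\Omega_{\lambda,R}(Q_{\lambda,R})$ obtained by inserting $\Phi_{\lambda,R}(Q_{\lambda,R})\le 2Q_{\lambda,R}/(Q_{\lambda,R}-1)$ is not literally $\Omega_{\lambda,R}(Q_{\lambda,R})$ itself but the quantity defined as $C_{\lambda,R}$ in~\eqref{eq3:constantclamb}; your wording already makes this clear.
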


\begin{proof}
\begin{description}
  \item[(i)]
We consider $\Phi_{\lambda,R}(q)$. By Lemma \ref{lem7:Ups_decreas}, if $q>e^{R-1}$ then $\Upsilon_{\lambda,R}(q)$ is a decreasing function of $q$. It implies that the
 function $2/(2-\frac{1}{q}-\Upsilon_{\lambda,R}(q))$ is decreasing also.

 Now we consider $\Omega_{\lambda,R}(q)=\lambda+\frac{R\cdot R!}{\beta_{\lambda,R}^{R-1}(q)}\Phi_{\lambda,R}(q)$, see \eqref{eq3:Omega_def}, \eqref{eq7:Phi_def}. The assertion on $\Omega_{\lambda,R}(q)$ follows from Lemma \ref{lem7:beta0} and the first part of this case~(i).

  \item[(ii)] The assertion \eqref{eq7:Phi<1} follows from \eqref{eq6:widePhi_def}, \eqref{eq7:binomw0rho<=},  and \eqref{eq7:Phi_def}.

 For \eqref{eq7:Omega<} we use \eqref{eq3:beta0_def}, \eqref{eq3:Omega_def}, \eqref{eq6:wideOmega_def}, and \eqref{eq7:Phi<1}.

  \item[(iii)] If $q> Q_{\lambda,R}$ then $q> e^{R-1}$, see Corollary \ref{cor7}, and $\Phi_{\lambda,R}(q)$ is a decreasing function, see the case (i).
 This implies $\Phi_{\lambda,R}(q)<\Phi_{\lambda,R}(Q_{\lambda,R})$. Then we use~\eqref{eq7:Phi<1}. Moreover, by \eqref{eq7:Phi_def}, \eqref{eq3:Qlx}, and Lemma~\ref{lem7:Ups_decreas}, we have
\begin{equation*}
\Phi_{\lambda,R}(Q_{\lambda,R})=\frac{2}{2-\frac{1}{Q_{\lambda,R}}-\Upsilon_{\lambda,R}(Q_{\lambda,R})}<
\frac{2}{1-\frac{1}{Q_{\lambda,R}}}=\frac{2Q_{\lambda,R}}{Q_{\lambda,R}-1}.
 \end{equation*}

 By Lemma \ref{lem7:beta0}, $\beta_{\lambda,R}(q)$ is an increasing function of $q$ that implies \eqref{eq7:beta>}.

 By Corollary \ref{cor7} and the case (i) of this lemma, $\Omega_{\lambda,R}(q)$ is a decreasing function, if $q> Q_{\lambda,R}$, and we have $\Omega_{\lambda,R}(q)<\Omega_{\lambda,R}(Q_{\lambda,R})$. Then we use \eqref{eq3:constantclamb}, \eqref{eq7:Phi<2} and obtain \eqref{eq7:Omega<2}.
  \item[(iv)]
  We use \eqref{eq3:constantDlamb}, \eqref{eq3:Omega_def}, \eqref{eq7:limUpsilon}, and \eqref{eq7:limbeta}. \qedhere
\end{description}
\end{proof}

We denote
\begin{equation}\label{eq6:lambmin_def}
  \lambda_{\min}\triangleq\sqrt[R]{R(R-1)\cdot R!}.
\end{equation}
\begin{lemma}\label{lem7:Dmin}
Let $R\ge3$ be fixed.
\begin{description}
    \item[(i)]
    The minimum value $D^{\min}_R$ of $D_{\lambda,R}$ regarding $\lambda$ is as follows:
    \begin{equation}\label{eq6:Dmin}
      D^{\min}_R\triangleq\min_\lambda D_{\lambda,R}=D_{\lambda_{\min},R}=\frac{R}{R-1}\sqrt[R]{R(R-1)\cdot R!}.
    \end{equation}
  \item[(ii)]  $\frac{1}{R}D^{\min}_R$ is a decreasing function of $R$.
  \item[(iii)] The relation \eqref{eq3:Dmin(N)<} holds.

  \item[(iv)] Let $\lambda= \lambda_{\min}$ and $D_{\lambda,R}=D^{\min}_R$. Then the size of the starting set of \eqref{eq6:w0} is $L>R$ if $q$ satisfies any of the following lower bounds:
      \begin{equation}\label{eq6:lowbndq}
       q\ln q>\frac{R^R}{R!\cdot R(R-1)};~q>\frac{R^R}{R!}.
      \end{equation}
\end{description}
\end{lemma}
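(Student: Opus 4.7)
The plan has four tasks, corresponding to the four clauses. For part (i), observe that $D_{\lambda,R}=\lambda+R\cdot R!/\lambda^{R-1}$ is strictly convex on $\lambda>0$, so its unique minimizer is obtained from the first-order condition
$$\frac{d}{d\lambda}D_{\lambda,R}=1-\frac{R(R-1)\cdot R!}{\lambda^R}=0,$$
which yields $\lambda=\lambda_{\min}$ as in \eqref{eq6:lambmin_def}. Using $\lambda_{\min}^R=R(R-1)\cdot R!$ then gives $R\cdot R!/\lambda_{\min}^{R-1}=\lambda_{\min}/(R-1)$, whence $D^{\min}_R=\lambda_{\min}(1+1/(R-1))=(R/(R-1))\lambda_{\min}$, matching \eqref{eq6:Dmin}.

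For part (ii), set $h(R)\triangleq D^{\min}_R/R=(R-1)^{-1}(R(R-1)R!)^{1/R}$ and aim to prove $h(R+1)<h(R)$ for all $R\ge 3$. Writing $a_R\triangleq R(R-1)R!$ and using $a_{R+1}/a_R=(R+1)^2/(R-1)$, the desired monotonicity reduces to
$$\frac{\ln a_R+\ln\frac{(R+1)^2}{R-1}}{R+1}-\frac{\ln a_R}{R}-\ln\frac{R}{R-1}<0,$$
which, after clearing denominators, is a Stirling-type statement in $\ln R!$. The main obstacle is producing a single uniform estimate valid over the whole range $R\ge 3$; I would combine $\ln R!\ge R\ln R-R$ (from convexity of $\ln$) to cover all $R$ above a small threshold, and dispatch the remaining finitely many small $R$ by direct evaluation.

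Given (ii), part (iii) is immediate: since $h$ is decreasing, each of the four bounds in \eqref{eq3:Dmin(N)<} reduces to a single arithmetic verification at the corresponding threshold $R_0\in\{3,7,36,178\}$. For instance $h(3)=\tfrac{1}{2}\sqrt[3]{36}\approx 1.6510<1.651$ and $h(7)=\tfrac{1}{6}\sqrt[7]{211680}\approx 0.961$; for $R_0=36$ and $R_0=178$ one invokes Stirling's formula to estimate $R_0!$ inside the $R_0$-th root and reads off the bounds $<0.498R$ and $<0.4R$ respectively.

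For part (iv), unfold $L=\lfloor\lambda_{\min}\sqrt[R]{q\ln q}\rfloor$ and use $\lambda_{\min}^R=R(R-1)R!$. The hypothesis $q\ln q>R^R/(R!\cdot R(R-1))$ is equivalent to $\lambda_{\min}^R\cdot q\ln q>R^R$, i.e.\ $\lambda_{\min}\sqrt[R]{q\ln q}>R$, which together with the strict left inequality in \eqref{eq6:w0b} gives $L>R$ as required by Construction~A. The simpler form $q>R^R/R!$ is then shown to imply the first $q\ln q$-inequality by multiplying through by $R(R-1)\ln q$ and invoking the trivial bound $\ln q>1/(R(R-1))$, which holds automatically whenever $q\ge 2$ and $R\ge 3$; thus both conditions yield the stated conclusion on $L$.
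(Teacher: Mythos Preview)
Your arguments for (i) and (iii) match the paper's and are correct (though writing ``$h(3)\approx 1.6510<1.651$'' is infelicitous; the actual value is about $1.6509$, and you should display enough digits to make the strict inequality visible).

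For (ii) your route diverges from the paper's. The paper does not pass through the ratio $a_{R+1}/a_R$ or Stirling estimates at all; instead it rewrites
\[
\frac{D^{\min}_R}{R}=\sqrt[R]{\frac{R^2\,(R-1)!}{(R-1)^{R-1}}}
= R^{2/R}\cdot\left(\frac{(R-1)!}{(R-1)^{R-1}}\right)^{1/R}
\]
and argues that each factor is decreasing in $R$: for the first, $(R^{2/R})'=2R^{2/R-2}(1-\ln R)<0$ when $R\ge 3$; for the second, $(R-1)!/(R-1)^{R-1}$ is decreasing for $R\ge 2$. This factorisation avoids the case analysis you anticipate and yields the monotonicity in two lines. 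Your log-difference reduction is perfectly viable, but it is only a sketch here---you never actually verify the resulting inequality, and ``combine $\ln R!\ge R\ln R-R$ above a threshold and check small $R$ directly'' is a plan, not a proof. If you keep your approach you must carry it through; if you want brevity, the paper's factoring is the cleaner tool.

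For (iv) there is a small but genuine slip. From $\lambda_{\min}\sqrt[R]{q\ln q}>R$ and the left inequality of \eqref{eq6:w0b} you get $L>\lambda_{\min}\sqrt[R]{q\ln q}-1>R-1$, hence only $L\ge R$, not $L>R$; the floor can kill exactly one unit. (The paper's own proof of (iv) is a one-line pointer to \eqref{eq6:w0} and \eqref{eq6:lambmin_def} and is no more explicit on this point, so the issue is arguably in the statement rather than your method; but as written your deduction ``gives $L>R$'' is not justified.) Your reduction of the second hypothesis $q>R^R/R!$ to the first via $\ln q>1/(R(R-1))$ is fine.
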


\begin{proof}
\begin{description}
    \item[(i)]
  We have $D'_{\lambda,R}=1-\lambda^{-R}(R-1)R\cdot R!$ where $D'_{\lambda,R}$ is the derivative of $D_{\lambda,R}$ as a function of $\lambda$. From $D'_{\lambda,R}=0$ follows $\lambda=\sqrt[R]{R(R-1)\cdot R!}=\lambda_{\min}$. It corresponds to the minimal $D_{\lambda,R}$ as the derivative is an increasing function of $\lambda$.
We substitute $\lambda_{\min}$ to \eqref{eq3:constantDlamb} that gives the last equality of \eqref{eq6:Dmin}.

  \item[(ii)]  We have $D^{\min}_R/R= \sqrt[R]{R^2 \cdot(R-1)!/(R-1)^{R-1}}$. The derivative  $(R^{2/R})'=2R^{2/R-2}(1-\ln R)<0$ as $R\ge3$; so, $\sqrt[R]{R^2}$ is a decreasing function of $R$. Also, $(R-1)!/(R-1)^{R-1}$ is a decreasing function of $R$ if $R\ge2$.
  \item[(iii)] For $R=3,7,36,178$, we directly calculate $D^{\min}_R$ by \eqref{eq3:constantDmin}. Then we use the case (ii) of this lemma.  See also Table \ref{tab1} below for the illustration.
  \item[(iv)]  The assertion follows from \eqref{eq6:w0} and \eqref{eq6:lambmin_def}. \qedhere
      \end{description}
\end{proof}

\begin{remark}
  It can be shown that
  \begin{align*}
   \lim_{R\rightarrow\infty} \frac{D^{\min}_R}{R}=\frac{1}{e}\thickapprox0,3679.
  \end{align*}
  So, the relation \eqref{eq3:Dmin(N)<} is convenient for estimates  of the new asymptotic bounds.
\end{remark}

Now we can prove Theorem \ref{th6:t=1} that is a version of Theorem \ref{th3:infin_fam} for $t=1$.

\begin{theorem}\label{th6:t=1}
 Let $R\ge3$ be fixed.  Let the values used here correspond to Notation~\ref{not1}. For the length function $\ell_q(R+1,R)$ and the smallest size $s_q(R,R-1)$ of an $(R-1)$-saturating set in the
projective space $\PG(R,q)$ the following upper bounds hold:
\begin{description}
  \item[(i)] \emph{(Upper bound by a decreasing function)}

  If $q>Q_{\lambda,R}$, then $\Omega_{\lambda,R}(q)$ is a decreasing function of $q$ and $\Omega_{\lambda,R}(q)<C_{\lambda,R}$. Moreover,
\begin{equation}\label{eq5:boundDecrFun}
\ell_q(R+1,R)=s_q(R,R-1)<\Omega_{\lambda,R}(q)\sqrt[R]{q\ln q}+2R,~q> Q_{\lambda,R}.
\end{equation}

  \item[(ii)]  \emph{(Upper bounds by constants)}

 Let $Q_0> Q_{\lambda,R}$ be a constant independent of $q$. Then $\Omega_{\lambda,R}(Q_0)$ is also a constant independent of $q$ such that $C_{\lambda,R}>\Omega_{\lambda,R}(Q_0)>D_{\lambda,R}$.
We have
\begin{equation}\label{eq6:boundConst}
\begin{split}
& \ell_q(R+1,R)=s_q(R,R-1)<c\sqrt[R]{q\ln q}+2R<\left(c+\frac{2R}{\sqrt[R]{q_0\ln q_0}}\right)\sqrt[R]{q\ln q},\\
&c\in\{C_{\lambda,R},\,\Omega_{\lambda,R}(Q_0)\},~q_0=\left\{\begin{array}{lcl}
                                                               Q_{\lambda,R} & \text{if} & c=C_{\lambda,R} \\
                                                               Q_0 & \text{if} & c=\Omega_{\lambda,R}(Q_0)
                                                             \end{array}
\right.,~q> q_0.
\end{split}
\end{equation}

  \item[(iii)] \emph{(Asymptotic upper bounds)}

   Let $q> Q_{\lambda,R}$ be large enough. Then the bounds \eqref{eq6:boundConstAsym} and \eqref{eq6:boundConstAsym2} hold.
\begin{equation}\label{eq6:boundConstAsym}
\ell_q(R+1,R)=s_q(R,R-1)<c\sqrt[R]{q\ln q}+2R,~c\in\{D^{\min}_R, D_{\lambda,R}\}.
\end{equation}
\begin{equation}\label{eq6:boundConstAsym2}
\begin{split}
&\ell_q(R+1,R)=s_q(R,R-1)<1.651R\sqrt[R]{q\ln q}+2R.
\end{split}
\end{equation}
\end{description}
\end{theorem}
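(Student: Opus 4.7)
The plan is to assemble Theorem \ref{th6:t=1} from Theorem \ref{eq6:sqRrho} (which bounds the size of the set produced by Construction A) together with the monotonicity and limit properties of $\Omega_{\lambda,R}$, $\Omega^*_{\lambda,R}$, $\Phi_{\lambda,R}$, $\beta_{\lambda,R}$ established in Lemmas \ref{lem7:Upsilon<=1}--\ref{lem7:Dmin}. No fundamentally new argument is needed; the task is to glue.

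First I would verify the hypothesis $\binom{L}{R-1}-1\le q$ of Theorem \ref{eq6:sqRrho} under the standing assumption $q>Q_{\lambda,R}$. By Lemma \ref{lem7:Upsilon<=1} it suffices to prove $\Upsilon_{\lambda,R}(q)\le1$; by definition \eqref{eq3:Qlx}--\eqref{eq3:Ups=1} of $Q_{\lambda,R}$, together with Corollary \ref{cor7} ($Q_{\lambda,R}>e^{R-1}$) and the monotonicity in Lemma \ref{lem7:Ups_decreas}, this is immediate: $q>Q_{\lambda,R}\ge y$ forces $\Upsilon_{\lambda,R}(q)<\Upsilon_{\lambda,R}(Q_{\lambda,R})\le 1$. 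Having the hypothesis in hand, part (i) follows by chaining Theorem \ref{eq6:sqRrho} with the comparison $\Omega^*_{\lambda,R}(q)<\Omega_{\lambda,R}(q)$ of Lemma \ref{lem7:properties}(ii); the monotonicity and the inequality $\Omega_{\lambda,R}(q)<C_{\lambda,R}$ are exactly Lemma \ref{lem7:properties}(i) and (iii).

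For part (ii), pick any constant $Q_0>Q_{\lambda,R}$. Since $\Omega_{\lambda,R}$ is decreasing on $(Q_{\lambda,R},\infty)$ by Lemma \ref{lem7:properties}(i), for $q>Q_0$ one has $\Omega_{\lambda,R}(q)<\Omega_{\lambda,R}(Q_0)$; combined with Lemma \ref{lem7:properties}(iii)--(iv) this yields the chain $C_{\lambda,R}>\Omega_{\lambda,R}(Q_0)>D_{\lambda,R}$ (the lower bound uses that the decreasing function $\Omega_{\lambda,R}$ has limit $D_{\lambda,R}$ at infinity). Substituting either constant in place of $\Omega_{\lambda,R}(q)$ in the bound of part (i) gives \eqref{eq6:boundConst}; the ``factored'' rewriting is just $2R=(2R/\sqrt[R]{q_0\ln q_0})\sqrt[R]{q_0\ln q_0}\le(2R/\sqrt[R]{q_0\ln q_0})\sqrt[R]{q\ln q}$ since $q>q_0$. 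Part (iii) uses the limit statement of Lemma \ref{lem7:properties}(iv) to replace, for $q$ large enough, the coefficient $\Omega_{\lambda,R}(q)$ by $D_{\lambda,R}$; then the optimization over $\lambda$ in Lemma \ref{lem7:Dmin}(i) produces the universal constant $D^{\min}_R$, and Lemma \ref{lem7:Dmin}(iii) converts this into the explicit $1.651R$ for $R\ge3$, yielding \eqref{eq6:boundConstAsym2}.

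The only real delicacy is the asymptotic step in part (iii): strictly speaking $\Omega_{\lambda,R}(q)>D_{\lambda,R}$ for every finite $q$, so ``replacing'' the coefficient requires interpreting ``$q$ large enough'' in the standard asymptotic sense, with the vanishing excess $\Omega_{\lambda,R}(q)-D_{\lambda,R}$ absorbed by the $+2R$ term (more precisely, any residual $\varepsilon\sqrt[R]{q\ln q}$ is dominated by a marginal increase in the additive term for all sufficiently large $q$). Apart from this bookkeeping issue, every claim reduces mechanically to a previously established monotonicity, limit, or arithmetic identity, so no fresh combinatorial argument is required.
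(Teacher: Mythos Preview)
Your proposal is correct and follows essentially the same route as the paper: invoke the Construction~A bound \eqref{eq6:sqRrho}, pass from $\Omega^*_{\lambda,R}$ to $\Omega_{\lambda,R}$ via Lemma~\ref{lem7:properties}(ii), and then read off monotonicity, the constant $C_{\lambda,R}$, and the limit $D_{\lambda,R}$ from Lemma~\ref{lem7:properties}(i),(iii),(iv) together with Lemma~\ref{lem7:Dmin}. Your explicit verification that $q>Q_{\lambda,R}$ forces $\Upsilon_{\lambda,R}(q)\le1$ (hence $\binom{L}{R-1}-1\le q$) is a worthwhile addition that the paper leaves implicit.

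One remark on the delicacy you flag in part~(iii): your proposed fix, absorbing $(\Omega_{\lambda,R}(q)-D_{\lambda,R})\sqrt[R]{q\ln q}$ into the additive $+2R$, does not work as stated, since that product need not stay bounded. The paper itself does not address this either; its proof simply cites Lemma~\ref{lem7:properties}(iv) and Lemma~\ref{lem7:Dmin} and treats ``$q$ large enough'' in the informal asymptotic sense (effectively $D_{\lambda,R}+o(1)$ rather than $D_{\lambda,R}$ exactly). So your approach matches the paper's, and the residual imprecision lies in the statement rather than in either proof.
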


\begin{proof}
\begin{description}
  \item[(i)]We use Lemma \ref{lem7:properties}(i) and \eqref{eq6:sqRrho},  \eqref{eq7:Omega<}.
  \item[(ii)] The assertion follows from \eqref{eq3:Upsilon_def}, \eqref{eq3:Omega_def}, \eqref{eq6:sqRrho}, \eqref{eq7:Omega<}, \eqref{eq7:Omega<2}, and Lemma \ref{lem7:properties}(i),(iv) which simplifies proving $\Omega_{\lambda,R}(Q_0)>D_{\lambda,R}$.
  \item[(iii)]  We use the case (i) of this theorem and  Lemmas \ref{lem7:properties}(iv) and \ref{lem7:Dmin}. \qedhere
\end{description}
\end{proof}

We call the value $C_{\lambda,R}+\frac{2R}{\sqrt[R]{Q_{\lambda,R}\ln Q_{\lambda,R}}}$ the \emph{basic constant} for $t=1$, see \eqref{eq6:boundConst}.
\begin{example}
In Table \ref{tab1}, examples of values connected with upper bounds of Theorem \ref{th6:t=1} and Theorem \ref{th3:infin_fam} for $t=1$ are given. For every $R$, the last value of $\lambda$ is $\lambda_{\min}$, see \eqref{eq6:lambmin_def} and Lemma \ref{lem7:Dmin}; it gives rise to $\min\limits_\lambda D_{\lambda,R}=D^{\min}_R$. In the table, the values of $R$, $Q_{\lambda,R}$, and $\lambda$, apart from $\lambda_{\min}$, are exact. The rest of them are approximate.

    \begin{table}[htbp]
    \centering
    \caption{Examples of values connected with upper bounds of Theorem \ref{th6:t=1} and Theorem \ref{th3:infin_fam} for $t=1$; $Q_0\in\{5\cdot10^4,15\cdot10^4\}$, $E=e^{R-1}$}\medskip
    \label{tab1}
    \begin{tabular}{@{}c|c|c|r|r|r|r|c@{}}
      \hline
      $R$ & $\lambda$ & $\Upsilon_{\lambda,R}(E)$ & $Q_{\lambda,R}$ & $C_{\lambda,R}$ &$\Omega_{\lambda,R}(Q_{0})$&$\Omega_{\lambda,R}(Q_0)$& $D_{\lambda,R}$\\
      $E$&&&&&$Q_0=$&$Q_0=$&\\
      &&&&&$5\cdot10^4$&$15\cdot10^4$&\\\hline
      3 &2.35&2.25&1007&9.50&6.43&6.17&5.61\\
      7.39&3& 3.67&7186&7.14&5.90&5.60&5\\
      &$\lambda_{\min}=$&4.44&14974&6.69&5.93&5.58&$4.953=D^{\min}_R$\\
      &3.302&&&&&&$=1.651R$\\\hline

      4&2.2&1.91&6826&25.9&18.49&16.42&11.22\\
      20.1&2.5&2.80&61724&16.5&&14.30&8.64\\
      &$\lambda_{\min}=$&12.55&118409572&6.89&&&$5.493=D^{\min}_R$\\
      &4.120&&&&&&$=1.373R$\\\hline

      5&2.3&1.59&21242&84.3&68.53&55.4&23.74\\
      54.6&2.5&2.22&283935&45.1&&&17.86\\
      &$\lambda_{\min}=$&28.72&&&&&$5.929=D^{\min}_R$\\
      &4.743&&&&&&$=1.186R$\\\hline

      6&2.5&1.35&37774&337&304.6&217.7&46.73\\
      148&$\lambda_{\min}=$&56.67&&&&&$6.333=D^{\min}_R$\\
      &5.277&&&&&&$=1.056R$\\\hline

      7&2.95&1.80&9125037&265&&&56.48\\
      403&$\lambda_{\min}=$&100.5&&&&&$6.726=D^{\min}_R$\\
      &5.765&&&&&&$=0.961R$\\\hline
    \end{tabular}
  \end{table}
\end{example}

\section{Bounds on the length function $\ell_q(4,3)$ and 2-saturating sets in $\PG(3,q)$}\label{sec:R=3}

For illustration, we compare the bounds obtained by computer search with the theoretical bounds of Theorems \ref{th6:t=1} and \ref{th3:infin_fam} for $R=3$, $t=1$.

 Complete arcs in  $\PG(3,q)$ are 2-saturating sets. In \cite{BDMP-arXivR3_2018,DMP-R=3Redun2019,DMP-ICCSA2020,DavOst-IEEE2001}, see also the references therein, small complete arcs in $\PG(3,q)$ for the region $13\le q\le7057$ are obtained by computer search using the so-called ``algorithms with the fixed order of points (FOP)'' and ``randomized greedy algorithms''. These algorithms are described in detail in \cite{BDMP-arXivR3_2018,DMP-R=3Redun2019}.

In this paper, we continue the computer search and obtain new small complete arcs in the region $7057<q\le7577$.

We denote by $\overline{t}(3,q)$ the size of the smallest known complete arc in $\PG(3,q)$. The arcs obtained in \cite{BDMP-arXivR3_2018,DMP-R=3Redun2019,DMP-ICCSA2020,DavOst-IEEE2001} and in this paper (one arc of \cite{Sonino} is used also) give the value of $\overline{t}(3,q)$ providing the following theorem, cf. \eqref{eq2:R=3_r=4}.

\begin{theorem}
In the projective space $\PG(3,q)$, for the size $\overline{t}(3,q)$ of the smallest known complete arc and the smallest size $s_q(3,2)$ of a $2$-saturating set the following upper bound holds:
\begin{equation}\label{eq7:comp_bnd}
s_q(3,2)\le\overline{t}(3,q)\le c_4\sqrt[3]{q\ln q},~
~c_4= \left\{\begin{array}{@{}l@{}}
  2.61 \text{ if }  13\le q\le4373 \smallskip\\
  2.65\text{ if } 4373< q\le7577
\end{array}
  \right..
\end{equation}
\end{theorem}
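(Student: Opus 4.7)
The plan has two components corresponding to the two inequalities in \eqref{eq7:comp_bnd}. First I would establish the general fact that $s_q(3,2)\le\overline{t}(3,q)$ by showing every complete arc in $\PG(3,q)$ is a $2$-saturating set. Second I would account for the numerical bound on $\overline{t}(3,q)$ by combining previously published computer-search results with the new computational search in the range $7057<q\le 7577$ carried out in this work.

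For the first inequality, let $\K\subset\PG(3,q)$ be a complete arc with $q\ge 13$. By the arc property (no four points coplanar), any three points of $\K$ are in general position: if three were collinear, adjoining any fourth arc point would produce four coplanar points, and for $q\ge 13$ a smallest complete arc has $\ge 4$ points by the data. Pick $A\in\PG(3,q)$. If $A\in\K$, then $A$ is trivially $0$-covered. Otherwise completeness forces $\K\cup\{A\}$ to fail the arc property, so there exist $P_1,P_2,P_3\in\K$ with $\{P_1,P_2,P_3,A\}$ coplanar. Since the $P_i$ are in general position they span that plane, hence $A$ lies in a $2$-subspace generated by three points of $\K$ in general position, i.e.\ $A$ is $2$-covered. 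By Definition~\ref{def1_usual satur}, $\K$ is $2$-saturating, and taking $\K$ of minimum size among the known complete arcs yields $s_q(3,2)\le\overline{t}(3,q)$.

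For the numerical upper bound, in the subrange $13\le q\le 7057$ the conclusion follows directly from the complete arcs constructed in \cite{BDMP-arXivR3_2018,DMP-R=3Redun2019,DMP-ICCSA2020,DavOst-IEEE2001,Sonino}, whose recorded sizes satisfy $\overline{t}(3,q)\le 2.61\sqrt[3]{q\ln q}$ for $q\le 4373$ and $\overline{t}(3,q)\le 2.65\sqrt[3]{q\ln q}$ for $4373<q\le 7057$, so only compilation of the known data is needed there. To cover the new subrange $7057<q\le 7577$, I would run the FOP and randomized greedy algorithms described in \cite{BDMP-arXivR3_2018,DMP-R=3Redun2019} on every prime power in that window, verifying after each run that the candidate set (i) is an arc (no four coplanar) and (ii) is complete (every point of $\PG(3,q)$ lies in some plane spanned by three arc points), then tabulating $\overline{t}(3,q)/\sqrt[3]{q\ln q}$ to confirm it does not exceed $2.65$.

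The hard part will be computational rather than conceptual. For $q$ close to $7577$ the ambient space has roughly $q^3\approx 4.3\cdot 10^{11}$ points, so each greedy step needs an efficient encoding of the uncovered set and fast secant-plane incidence queries; the FOP strategy helps because fixing the point ordering in advance permits incremental maintenance of the ``covered'' list, while the randomized greedy strategy explores more of the configuration space but needs many restarts. Apart from this engineering overhead, the theorem's mathematical content is concentrated in the short arc-implies-saturating argument above, and the remainder is computational bookkeeping of the resulting arc sizes against the two piecewise constants $2.61$ and $2.65$.
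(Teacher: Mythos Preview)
Your proposal is correct and matches the paper's approach: the paper simply states that complete arcs in $\PG(3,q)$ are $2$-saturating sets (yielding $s_q(3,2)\le\overline{t}(3,q)$) and then asserts the numerical bound as a consequence of the computer-search arcs from \cite{BDMP-arXivR3_2018,DMP-R=3Redun2019,DMP-ICCSA2020,DavOst-IEEE2001,Sonino} together with the new search for $7057<q\le7577$, without a separate formal proof. Your write-up merely spells out the short arc-implies-saturating argument that the paper leaves implicit.
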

In Figure \ref{fig1}, the sizes $\overline{t}(3,q)$ of the smallest known \cite{BDMP-arXivR3_2018,DMP-R=3Redun2019,DMP-ICCSA2020,DavOst-IEEE2001,Sonino} complete arcs in $\PG(3,q)$ divided by $\sqrt[3]{q\ln q}$ are shown by the bottom curve. The upper bounds of Theorem \ref{th6:t=1} and Theorem \ref{th3:infin_fam} for $R=3$, $t=1$ (also divided by $\sqrt[3]{q\ln q}$) are given by the top curve; the value $\lambda=3$ is used, see Table \ref{tab1}.

\begin{figure}[htbp]
\includegraphics[width=\textwidth]{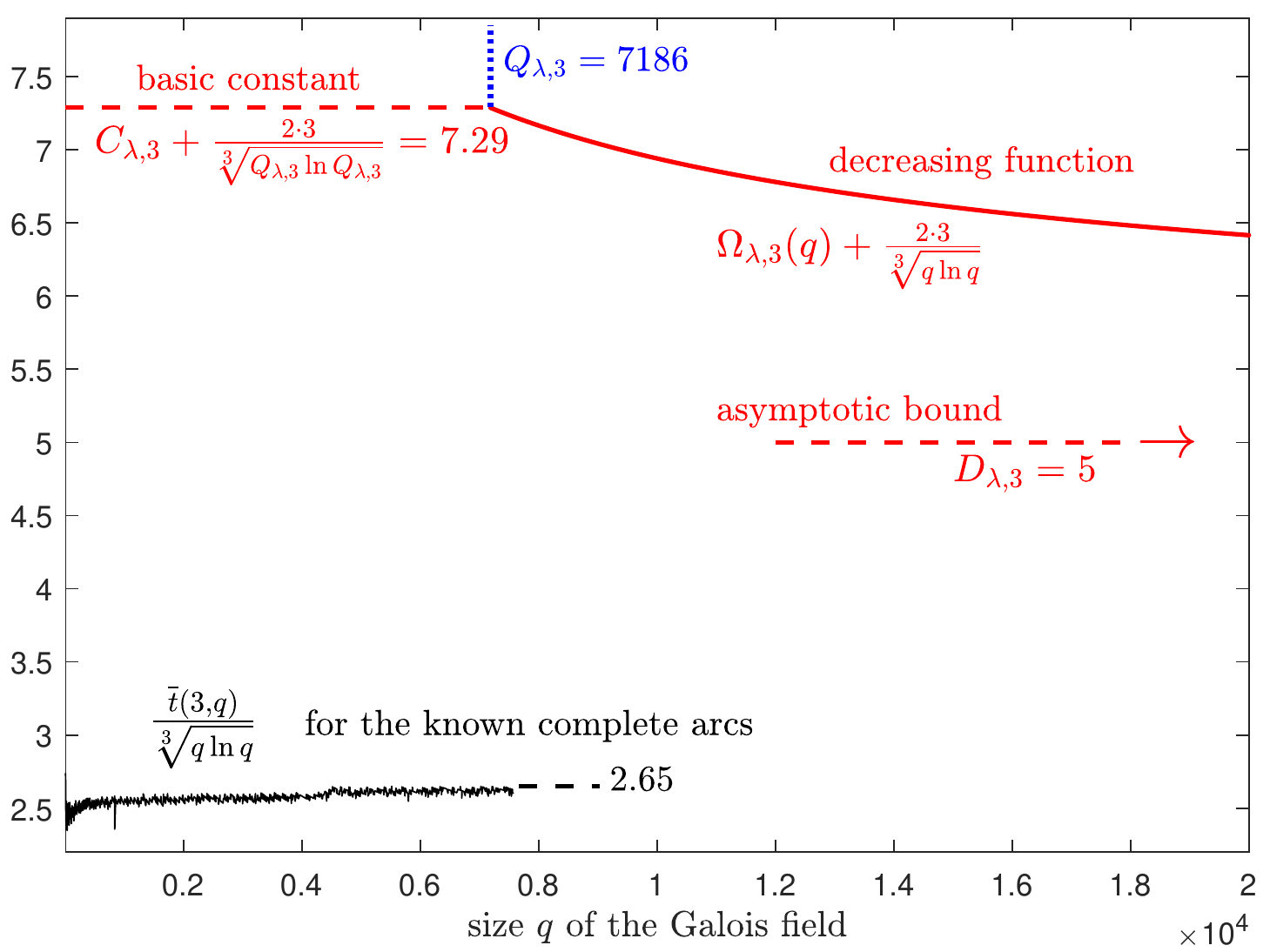}
\caption{The sizes $\overline{t}(3,q)$ of the smallest known complete arcs in $\PG(3,q)$ (\emph{bottom curve}) vs upper bounds of Theorem \ref{th6:t=1} and Theorem \ref{th3:infin_fam} (\emph{top curve}) for $R=3$, $t=1$; the sizes and bounds are divided by $\sqrt[3]{q\ln q}$; $\lambda=3$}
\label{fig1}
\end{figure}

In Figure \ref{fig1}, we see that the computer bounds are better than the theoretical ones, but the order of the value of the bounds is the same.

\section{Upper bounds on the length function $\ell_q(tR+1,R)$}\label{sec:qmconc}
Proposition \ref{prop8_lift_r=Rt+1} is a variant of the lift-constructions ($q^m$-concatenating constructions) for covering codes \cite{Dav90PIT,Dav95,DGMP-Bulg2008,DGMP-AMC,DavOst-IEEE2001,DavOst-DESI2010}, \cite[Section~5.4]{CHLS-bookCovCod}.

\begin{proposition}\label{prop8_lift_r=Rt+1}
\cite[Section 3]{Dav95}, \cite[Section 2, Construction $QM_1$]{DGMP-AMC}
Let an $ [n_{0},n_{0}-r_0]_{q}R$ code with $n_0\le q+1$ exist. Then there is an infinite family of
$[n,n-r]_{q}R$ codes with parameters
\begin{equation*}
  n=n_{0} q^m+R\theta_{m,q},~  r=r_0+Rm,~m\ge1.
\end{equation*}
\end{proposition}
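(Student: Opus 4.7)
The plan is to construct an explicit parity-check matrix $H$ of the required dimensions from a parity-check matrix $H_0$ of the given $[n_0,n_0-r_0]_q R$ code, realising the $q^m$-concatenating construction $QM_1$ of~\cite{DGMP-AMC}. The design is engineered so that covering an arbitrary syndrome in $\F_q^{r_0+Rm}$ reduces to covering in $\F_q^{r_0}$ (handled by $H_0$) plus an $\F_{q^m}$-based adjustment (handled by an auxiliary Hamming-type block).

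First, I would fix the building blocks. Write the columns of $H_0$ as $h_1,\ldots,h_{n_0}\in\F_q^{r_0}$ and use the hypothesis $n_0\le q+1$ to attach to each index $i$ a distinct scalar $a_i$ from a $(q+1)$-subset of $\F_{q^m}\cup\{\infty\}$, e.g.\ via a projective embedding of $\PG(1,q)$. Fix an $\F_q$-linear isomorphism $\phi:\F_{q^m}\to\F_q^m$, and fix a parity-check matrix of the $[\theta_{m,q},\theta_{m,q}-(m+1)]_q 1$ Hamming code to serve as the auxiliary $1$-covering tool.

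Next, I would assemble $H$ as a block matrix with $r_0+Rm$ rows. The main block contributes $n_0q^m$ columns indexed by $(i,\alpha)\in\{1,\ldots,n_0\}\times\F_{q^m}$: the top $r_0$ entries are $h_i$, while each of the $R$ lower strips of $m$ rows carries $\phi(a_i^{k-1}\alpha)$ for $k=1,\ldots,R$. The remaining $R\theta_{m,q}$ columns form $R$ auxiliary blocks, the $k$-th one giving Hamming-type $1$-covering inside the $k$-th lower strip and zeros elsewhere. To verify covering radius at most $R$, take a target syndrome $v=(v_0;v_1,\ldots,v_R)$ and use the covering property of the starting code to write $v_0=\sum_{\ell=1}^{\overline{R}}h_{i_\ell}$ with $\overline{R}\le R$; since the scalars $a_{i_1},\ldots,a_{i_{\overline{R}}}$ are distinct, a Vandermonde argument over $\F_{q^m}$ determines $\alpha_\ell\in\F_{q^m}$ so that the main columns $(i_\ell,\alpha_\ell)$ contribute the prescribed values in $\overline{R}$ of the lower strips, while the remaining $R-\overline{R}$ strips are matched by one auxiliary Hamming column each. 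The lower-bound $R$ on the covering radius of the new code follows from the lower-bound on that of $H_0$ and the fact that setting all auxiliary coordinates to zero recovers a syndrome forcing $\overline{R}=R$.

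The main obstacle is column accounting: one must ensure that the total number of columns used, across both the main and auxiliary blocks, never exceeds $R$ regardless of the target syndrome. This requires a case analysis on $\overline{R}$ and on how the residual in each lower strip is absorbed — the central combinatorial/algebraic device of the $q^m$-concatenating constructions — for which I would invoke the detailed verifications in~\cite[Section~3]{Dav95} and~\cite[Section~2, Construction~$QM_1$]{DGMP-AMC} rather than reproducing the full bookkeeping here.
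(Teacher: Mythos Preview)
Your approach matches the paper's: both invoke Construction $QM_1$ of \cite{DGMP-AMC} (with \cite{Dav95}) rather than proving the statement from scratch, and the paper's proof is in fact even terser than yours --- it merely specifies the parameters $\ell_0=0$ and the trivial $(R,\ell_0)$-partition with $p_0=n_0$ and notes that this case is always available. One small inconsistency in your explicit sketch is worth flagging: the $[\theta_{m,q},\theta_{m,q}-(m+1)]_q1$ Hamming parity-check matrix has $m+1$ rows, so it cannot sit entirely inside a single $m$-row lower strip ``with zeros elsewhere'' as you describe; the auxiliary block in the actual $QM_1$ construction is arranged somewhat differently, but since you (like the paper) explicitly defer the precise layout and the column-accounting case analysis to \cite{Dav95,DGMP-AMC}, this is a cosmetic slip rather than a genuine gap.
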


\begin{proof}
  In terms of \cite{DGMP-AMC}, for its Construction $QM_1$, we take $\ell_0=0$ and use the trivial $(R,\ell_0)$-partition with $p_0=n_0$. Such a case is always possible.
\end{proof}

\begin{theorem}\label{th88:inf_fam}
Let $R\ge3$ be fixed. Let $q> q_0$ where $q_0$ is a certain constant. Let $f(q)$ be a decreasing function of $q$. Let $c_j$ be constants independent of $q$ such that $c_1\sqrt[R]{q\ln q}+c_2\le q+1$ and $f(q_{0})\sqrt[R]{q\ln q}+c_2\le q+1$. Let an $[n_0,n_0-(R+1)]_qR$ code exist with $n_0=\varphi\sqrt[R]{q\ln q}+c_2$ where $\varphi\in\{c_1,f(q)\}$. Then there is an infinite family of
$[n,n-r]_{q}R$ codes with growing codimension $r=tR+1$, $t\ge2$, and the following parameters:
\begin{equation}\label{eq8:inf_fam}
\begin{split}
n&=\varphi q^{(r-R)/R}\cdot\sqrt[R]{\ln q}+c_2q^{t-1}+R\theta_{t-1,q}\\
&<\left(\varphi+\frac{c_2+Rq/(q-1)}{\sqrt[R]{q\ln q}}\right)q^{(r-R)/R}\cdot\sqrt[R]{\ln q}\\
&<\left(\varphi+\frac{c_2+Rq_0/(q_0-1)}{\sqrt[R]{q_0\ln q_0}}\right)q^{(r-R)/R}\cdot\sqrt[R]{\ln q},\\
&  r=tR+1,~t\ge2,~q> q_0,~ R\ge 3.
\end{split}
\end{equation}
\end{theorem}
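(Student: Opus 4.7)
The plan is to apply the $q^m$-concatenating construction of Proposition~\ref{prop8_lift_r=Rt+1} directly to the starting $[n_0,n_0-(R+1)]_q R$ code, with the choices $r_0=R+1$ and $m=t-1\ge 1$. This immediately yields codimension $r_0+Rm=tR+1$ and length $n=n_0 q^{t-1}+R\theta_{t-1,q}$, which matches the first line of~\eqref{eq8:inf_fam} once $n_0=\varphi\sqrt[R]{q\ln q}+c_2$ is substituted.

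Before invoking Proposition~\ref{prop8_lift_r=Rt+1} I must verify its technical hypothesis $n_0\le q+1$. When $\varphi=c_1$ this is exactly the standing assumption $c_1\sqrt[R]{q\ln q}+c_2\le q+1$. When $\varphi=f(q)$, the monotonicity of $f$ together with $q>q_0$ gives $f(q)\le f(q_0)$, hence
$$n_0=f(q)\sqrt[R]{q\ln q}+c_2\le f(q_0)\sqrt[R]{q\ln q}+c_2\le q+1$$
by the second standing assumption. So Proposition~\ref{prop8_lift_r=Rt+1} is applicable in both sub-cases, producing an infinite family in $m\ge 1$, i.e.\ in $t\ge 2$.

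What remains is to massage the expression $n=n_0 q^{t-1}+R\theta_{t-1,q}$ into the two displayed upper bounds. The key identity is
$$q^{t-1}\sqrt[R]{q\ln q}=q^{(tR-R+1)/R}\sqrt[R]{\ln q}=q^{(r-R)/R}\sqrt[R]{\ln q},$$
which rewrites the main term as $\varphi\,q^{(r-R)/R}\sqrt[R]{\ln q}$. For the first strict inequality I would combine the two lower-order summands via $\theta_{t-1,q}=(q^t-1)/(q-1)<q^{t-1}\cdot q/(q-1)$, giving $c_2q^{t-1}+R\theta_{t-1,q}<(c_2+Rq/(q-1))q^{t-1}$, and then factor out $q^{(r-R)/R}\sqrt[R]{\ln q}$ using $q^{t-1}=q^{(r-R)/R}\sqrt[R]{\ln q}/\sqrt[R]{q\ln q}$. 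The final inequality replaces $q$ by $q_0$ inside the correction coefficient; this only enlarges it, because $\sqrt[R]{q\ln q}$ is increasing and $q/(q-1)$ is decreasing in $q$ on the range $q>q_0$.

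There is no genuine obstacle beyond careful bookkeeping: the only subtle point is the hypothesis $n_0\le q+1$ of the lift-construction, and the two inequalities imposed on $c_1$, $f$, and $c_2$ in the statement of the theorem are precisely what is needed to secure it uniformly in $q>q_0$. Everything else is algebraic manipulation of powers of $q$ and $\ln q$.
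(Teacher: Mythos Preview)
Your proposal is correct and follows essentially the same route as the paper: apply Proposition~\ref{prop8_lift_r=Rt+1} with $r_0=R+1$ and $m=t-1$, then bound $\theta_{t-1,q}$ by the geometric series $q^{t-1}\cdot q/(q-1)$ and observe that the resulting correction term decreases in $q$. You are in fact slightly more careful than the paper, which invokes Proposition~\ref{prop8_lift_r=Rt+1} without explicitly verifying the hypothesis $n_0\le q+1$; your use of the monotonicity of $f$ to cover the case $\varphi=f(q)$ is exactly the right justification.
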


\begin{proof}
Using Proposition \ref{prop8_lift_r=Rt+1} with $r_0=R+1$, we obtain
  \begin{equation*}
  \begin{split}
n&=(\varphi\sqrt[R]{q\ln q}+c_2)q^m+R\theta_{m,q}=q^m\sqrt[R]{q\ln q}\left(\varphi+\frac{c_2+R\sum_{j=0}^mq^{-j}}{\sqrt[R]{q\ln q}}\right)\\
&=q^{t-1}\sqrt[R]{q\ln q}\left(\varphi+\frac{c_2+R\sum_{j=0}^{t-1}q^{-j}}{\sqrt[R]{q\ln q}}\right)<q^{t-1}\sqrt[R]{q\ln q}\left(\varphi+\frac{c_2+R\frac{q}{q-1}}{\sqrt[R]{q\ln q}}\right)
  \end{split}
  \end{equation*}
  where for the last inequality we use the sum of a geometric progression.
  Note that $r=tR+1=R+1+mR$, hence $m=t-1=(r-R-1)/R$. Finally, it is obvious that $(c_2+Rq/(q-1))/\sqrt[R]{q\ln q}$ is a decreasing function of $q$.
\end{proof}

Now we are able to prove Theorem \ref{th3:infin_fam}.

\begin{proof} [Proof of Theorem \ref{th3:infin_fam}]
For $t=1$, Theorem \ref{th3:infin_fam} is equivalent to Theorem \ref{th6:t=1}.

Let us consider the cases $t\ge2$.
We apply \eqref{eq8:inf_fam} with $q_0=Q_{\lambda,R}$.
\begin{description}
  \item[(i)] We use Theorem \ref{th6:t=1}(i)  and relation \eqref{eq8:inf_fam} of Theorem \ref{th88:inf_fam} taking $\varphi=f(q)=\Omega_{\lambda,R}(q)$, $c_2=2R$.
Also, for $q> Q_{\lambda,R}$, we have $\Omega_{\lambda,R}(q)<C_{\lambda,R}$, see Lemma \ref{lem7:properties}(iii) with \eqref{eq7:Omega<2}.

  \item[(ii)] The assertion follows from Theorem  \ref{th6:t=1}(i)(ii) and relation \eqref{eq8:inf_fam} with $\varphi=c_1\in\{C_{\lambda,R},\,\Omega_{\lambda,R}(Q_0)\}$, $c_2=2R$.

  \item[(iii)]  We use Theorem \ref{th6:t=1}(iii) and relation \eqref{eq8:inf_fam} with $\varphi=c_1\in\{D^{\min}_R,D_{\lambda,R}\}$, $c_2=2R$. This gives \eqref{eq3:boundConstAsym}.

       In the last inequality of \eqref{eq3:boundConstAsym}, we put $c=D^{\min}_R$ and consider $D^{\min}_R/R+\psi(q,R)$ where $\psi(q,R)=(2+q/(q-1))/\sqrt[R]{q\ln q}$. Obviously, $\psi(q,R)$ is a decreasing function of $q$.

       We check by computer that $D^{\min}_R/R+\psi(q,R)<3.35$ if $R<178$ and $q\ge41$.

       It can be shown that for a fixed $q$, we have $\lim_{R\rightarrow\infty}\psi(q,R)=2+q/(q-1)$ that gives rise to $\psi(q,R)<2+q/(q-1)$ as $\psi(q,R)$  is an increasing function of~$R$. So, for any $R\ge3$, we have $\psi(q,R)<3.03$ if $q\ge41$. Now we take into account that $D^{\min}_R<0.4R$ if $R\ge178$, see  \eqref{eq3:Dmin(N)<}. As a result, $D^{\min}_R/R+\psi(q,R)<3.43$.  \qedhere
       \end{description}
\end{proof}

\section*{Acknowledgments}
The research of S. Marcugini, and F. Pambianco was supported in part by the Italian
National Group for Algebraic and Geometric Structures and their Applications (GNSAGA -
INDAM) (Contract No. U-UFMBAZ-2019-000160, 11.02.2019) and by University of Perugia
(Project No. 98751: Strutture Geometriche, Combinatoria e loro Applicazioni, Base Research
Fund 2017-2019).

 \end{document}